\title{B-Treaps Revised: Write Efficient Randomized Block Search Trees with High Load} 
\author{Roodabeh Safavi}{Institute of Science and Technology Austria (ISTA), Am Campus 1, A-3400 Klosterneuburg, Austria}{roodabehsafavi@gmail.com}{https://orcid.org/0000-0003-4516-4212}{}
\author{Martin P. Seybold}{University of Vienna, Faculty of Computer Science, Theory and Applications of Algorithms, Währinger Straße 29, A-1090 Vienna, Austria}{mpseybold@gmail.com}{https://orcid.org/0000-0001-6901-3035}{}
\authorrunning{Roodabeh Safavi and Martin P. Seybold} 
\keywords{Unique Representation, Randomization, Block Search Tree, Write-Efficiency, Storage-Efficiency, Computational Geometry, Top-Down Analysis} 
\algnewcommand\algorithmicforeach{\textbf{for each}}
\newcommand{\E}{\mathbb{E}}
\newcommand{\eps}{\varepsilon}
\renewcommand{\O}{\mathcal{O}}
\newcommand{\ceil}[1]{{\lceil #1 \rceil}}
\newcommand{\floor}[1]{{\lfloor #1 \rfloor}}
\newcommand{\Perm}{\operatorname{Perm}}
\begin{document}

\maketitle

\begin{abstract}
Uniquely represented data structures represent each logical state with a unique storage state.
We study the problem of maintaining a dynamic set of $n$ keys from a totally ordered universe in this context.

We introduce a two-layer data structure called $(\alpha,\varepsilon)$-Randomized Block Search Tree (RBST) that is uniquely represented and suitable for external memory.
Though RBSTs naturally generalize the well-known binary Treaps, several new ideas are needed to analyze the {\em expected} search, update, and storage, efficiency in terms of block-reads, block-writes, and blocks stored.
We prove that searches have $O(\varepsilon^{-1} + \log_\alpha n)$ block-reads, that $(\alpha, \varepsilon)$-RBSTs have an asymptotic load-factor of at least $(1-\varepsilon)$ for every $\varepsilon \in (0,1/2]$, and that dynamic updates perform $O(\varepsilon^{-1} + \log_\alpha(n)/\alpha)$ block-writes, i.e. $O(1/\varepsilon)$ writes if $\alpha=\Omega(\frac{\log n}{\log \log n})$.
Thus $(\alpha, \varepsilon)$-RBSTs provide improved search, storage-, and write-efficiency bounds in regard to the known, uniquely represented B-Treap [Golovin; ICALP'09].
\end{abstract}

\section{Introduction}
\label{sec:intro}
Organizing a set of keys 
such that insertions, deletions, and searches, are supported is one of the most basic problems in computer science that drives the development and analysis of search structures with various guarantees and performance trade-offs.
Binary search trees for example have several known height balancing methods (e.g. AVL and Red-Black trees) and weight balancing methods (e.g. BB[a] trees), some with $\O(1)$ writes per update (e.g.~\cite{ElmasryKAH19}). 

Block Search Trees are generalizations of binary search trees that store in every tree node up to $\alpha$ keys and $\alpha+1$ child pointer, where the array size $\alpha$ of the blocks is a (typically) fixed parameter.
Since such layouts enforce a certain data locality, block structures are central for read and write efficient access in machine models with a memory hierarchy.

\subparagraph{External Memory and deterministic B-Trees variants}
In the classic External Memory~(EM) model, the $n$ data items of a problem instance must be transferred in blocks of a fixed size $B$ between a block-addressible external memory and an internal Main Memory~(MM) that can store up to $M$ data items and perform computations.
Typically, $n \gg M > B$ and the MM can only store a small number of blocks, say $M/B=\O(1)$, throughout processing.
Though Vitter's Parallel Disk Model can formalize more general settings (see~\cite[Section~$2.1$]{Vitter01}), we are interested in the most basic case with one, merely block-addressible, EM device and one computing device with MM.
The cost of a computation is typically measured in the number of block-reads from EM~(Is), the number of block-writes to EM~(Os), or the total number of IOs performed on EM.
For basic algorithmic tasks like the range search, using block search trees with $\alpha=\Theta(B)$ allows obtaining algorithms with asymptotic optimal total IOs (see, e.g., Section~$10$ in the survey~\cite{Vitter01}).

Classic B-Trees~\cite{BayerM72} guarantee that every block, beside the root, has a load-factor of at least $1/2$ and that all leaves are at equal depth, using a deterministic balance criteria.
The B*-Tree variant~\cite[Section~6.2.4]{KnuthVol3} guarantees a load-factor of at least $2/3$ based on a (deterministic) overflow strategy that shares keys among the two neighboring nodes on equal levels.
Yao's Fringe Analysis~\cite{Yao78} showed that inserting keys under a random order in (deterministic balanced) B-Trees yields an expected asymptotic load-factor of $\ln(2)\approx 69\%$, and the expected asymptotic load-factor is $2\ln(3/2)\approx81\%$ for B*-Trees.
For general workloads however, maintaining higher load-factors (with even wider overflow strategies) further increases the write-bounds of updates in the block search trees~\cite{Baeza-Yates89a,Kuspert83}.

The popular, e.g.~\cite{EsmetBFK12,JannenYZAEJMPRW15}, write-optimized B$^\eps$-Tree~\cite{BrodalF03,BenderFFFKN07} provides smooth trade-offs between 
$ \O( \log_{1+\alpha^\eps} (n) / \alpha^{1-\eps})$ 
amortized block writes per insertion and 
$\displaystyle \O(\log_{1+\alpha^\eps} n)$ 
block reads for searches.
E.g. tuning parameter $\eps=1$ retains the bounds of B-Trees and $\eps=1/2$ provides improved bounds. 
In the fully write-optimized case ($\eps=0$) however, searches have merely a $\O(\log n)$ bound for the number of block-reads.
The main design idea to achieve this is to augment the non-leave nodes of a B-Tree with additional `message buffers' so that a key insertion can be stored close to the root. Messages then only need propagation, further down the tree, once a message buffer is full (cf.~\cite[Section~$3.4$]{BrodalF03} and~\cite{BenderFFFKN07}). 

Clearly, the state of either of those (deterministic) structures depends heavily on the actual sequence in which the keys were inserted in them.

\subparagraph{Uniquely Represented Data Structures}
For security or privacy reasons, some applications, see e.g.~\cite{GolovinPhd}, require data structures that do not reveal any information about historical states to an observer of the memory representation, i.e. evidence of keys that were deleted or evidence of the keys' insertion sequence.
Data structures are called \emph{uniquely represented} (UR) if they have this strong history independence property.
For example, sorted arrays are UR, but the aforementioned search trees are not UR due to, e.g., deterministic rebalancing.
Early results~\cite{Snyder77, SundarT94, AnderssonO95} show lower and upper bounds on comparison based search in UR data structures on the pointer machine.
Using UR hash tables~\cite{BlellochG07, NaorT01} in the RAM model, also pointer structures can be mapped uniquely into memory.
(See Theorem~$4.1$ in~\cite{BlellochG07} and Section~$2$ in~\cite{BlellochGV08}.) 

The Randomized Search Trees are defined by inserting keys, one at a time, in order of a random permutation into an initially empty \emph{binary} search tree.
The well-known Treap~\cite{SeidelA96} maintains the tree shape, of the permutation order, and supports efficient insertions, deletions, and searches (see also~\cite[Chapter~1.3]{mulmuley}). 
Any insertion, or deletion performs expected $\O(1)$ rotations (writes) and searches read expected at most $2 \ln (n+1)$ tree nodes, which is particularly noteworthy for machine models with asymmetric cost or concurrent access.

Golovin~\cite{Golovin09} introduced the B-Treap as first UR data structure for Byte-addressable External Memory~(BEM).
B-Treaps are due to a certain block storage layout of an associated (binary) Treap.
I.e. the block tree is obtained, bottom-up, by iteratively placing sub-trees, of a certain size, in one block node\footnote{E.g. byte-addressable memory is required for intra-block child pointers and each key maintains a size field that stores its sub-tree size (within the associated binary Treap).}.
The author shows bounds of B-Treaps, under certain parameter conditions:
If $\alpha = \Omega \left( (\ln n)^{\frac{1}{1-\varepsilon} } \right)$, for some $\varepsilon>0$, then updates have expected $\O(\frac{1}{\varepsilon} \log_\alpha n)$ block-writes and range-queries have expected $\O(\frac{1}{\varepsilon} \log_\alpha n + k/\alpha)$ block-reads, where $k$ is the output size.
If $\alpha = \Omega \left( n^{\frac{1}{2}-\varepsilon} \right)$, for some $\varepsilon>0$, then depth is with high probability $\O(\frac{1}{\varepsilon} \log_\alpha n)$ and storage space is linear (see Theorem~1 in \cite{Golovin09}).
The paper also discusses experimental data, from a non-dynamic implementation, that shows an expected load-factor close to $1/3$ (see Section~6 in \cite{Golovin09}).
Though the storage layout approach allows leveraging known bounds of Treaps to analyze B-Treaps, the block size conditions, e.g. $\alpha=\Omega( \ln^2 n )$, are limiting for applications of B-Treaps in algorithms.

\begin{table} \renewcommand{\arraystretch}{1.2}
\begin{tabular*}{\textwidth}{ccc|rll} 
~ & Data Structure & Model & ~ & \#~Blocks & ~                    \\  \hline
\multirow{6}{*}{non-UR} & \multirow{3}{*}{B-Tree~\cite{BayerM72}}  & 
\multirow{3}{*}{EM}
    & Size 
    & $\O(n/\alpha)$  
    &  \\  
    & & & Updates  & \multirow{2}{*}{$\O( \log_\alpha n)$} & \multirow{2}{*}{} \\
    & & & Search & & 
\\ \cline{2-6}
        & \multirow{3}{*}{B$^\eps$-Tree~\cite{BrodalF03}}  & 
\multirow{3}{*}{EM}
    & Size 
    & $\O(n/\alpha)$  
    & \\  
    & & & Updates  & $\O\left( \frac{1}{\eps} \frac{\log_\alpha n}{\alpha^{1-\eps}} \right)$ & \multirow{2}{*}{ } \\
    & & & Search & $\O\left(\frac{1}{\eps} \log_{\alpha}n\right)$& 
\\ \hline 
\multirow{7}{*}{UR} & \multirow{3}{*}{B-Treap~\cite{Golovin09}}  & 
\multirow{3}{*}{BEM}
    & Size 
    & $\O(n/\alpha)$  
    & 
    \\  
    & & & Updates write & \multirow{2}{*}{$\O\left(\frac{1}{\eps} \log_\alpha n \right)$} & \multirow{2}{*}{, if $\alpha=\Omega\left( \ln^{\frac{1}{1-\varepsilon}}n \right)$.}\\
    & & & Search reads & & 

\\ \cline{2-6}
& \multirow{5}{*}{$(\alpha,\eps)$-RBST} & \multirow{4}{*}{EM} &  \multirow{2}{*}{Size}
& $\O(n/\alpha)$ & \\ 
&&&& $\leq (1+\eps)n/\alpha$  &, if $n=\omega(\alpha^2/\eps)$.\\
\cline{4-5}
&&& \multirow{2}{*}{Updates write} & $\O\left(\frac{1}{\eps} + \frac{\log_\alpha n}{\alpha}\right)$ & \\
&&&&$\O(1/\eps)$ &, if $\alpha=\Omega\left( \frac{\log n}{\log \log n} \right)$. \\
\cline{4-5}
&&& Search reads & $\O(\frac 1 \eps + \log_\alpha n)$ &  ~ \\

\end{tabular*}
\caption{Overview of known non-UR and UR data structures for EM and the proposed RBSTs.} \label{tabel:overview} \vspace{-.25cm}
\end{table}

\subsection{Contribution and Paper Outline}
We propose a two-layer randomized data structure called $(\alpha, \eps)$-Randomized Block Search Trees (RBSTs), where $\alpha$ is the block size and $\eps \in (0,1/2]$ a tuning parameter for the space-efficiency of our secondary layer (see Section~\ref{sec:RBSTs}).
Without the secondary structures, the RBSTs are precisely the distribution of trees that are obtained by inserting the keys, one at a time, under a random order  into an initially empty block search tree (e.g. $\alpha = 1$ yields the Treap distribution). 
RBSTs are UR and the algorithms for searching are simple.
We give a partial-rebuild algorithm for dynamic updates that occupies $\O(1)$ blocks of temporary MM storage and performs a number of write operation to EM that is proportional to the structural change of the RBST (see Section~\ref{sec:algo-updates}).

We prove three performance bounds for $(\alpha,\eps)$-RBSTs in terms of block-reads, block-writes, and blocks stored.
Section~\ref{sec:search-and-depth} shows that searches read expected $\O(\eps^{-1}+ \log_\alpha n)$ blocks.
Section~\ref{sec:size} shows that $(\alpha, \eps)$-RBSTs have an expected asymptotic load-factor of at least $(1 -\eps)$ for every $\eps \in (0,1/2]$.
Combining both analysis techniques, we show in Section~\ref{sec:efficient-updates} that dynamic updates perform expected $\O(\eps^{-1}+\log_\alpha(n)/\alpha)$ block-writes.

Thus, RBSTs are simple UR search trees, for all block sizes $\alpha \geq 1$, that simultaneously provide improved search, storage utilization, and write-efficiency, compared to the bottom-up B-Treap~\cite{Golovin09}.
See Table~\ref{tabel:overview} for a comparison with known UR and non-UR data structures for EM. 
To the best of our knowledge, $(\alpha,\eps)$-RBSTs are the first structure that provides the optimal search bound while being fully-write efficient and storage-efficient.
Central for the design of our block search tree is the secondary layer that uses a fan-out for the block nodes that is proportional the subtree's weight, i.e. the number of keys that are stored in it.

\section{Randomized Block Search Trees} 
\label{sec:RBSTs}
Unlike the well-known B-Trees that store between $\alpha$ and $\alpha/2$ keys in every block and have all leaves at equal depth, our definition of $(\alpha,\eps)$-RBSTs does not strictly enforce a minimum load for every single node.
The shape of an RBST $T_X(\pi)$, over a set $X$ of $n$ keys, is defined by an incremental process that inserts the keys $x \in X$ in a random order $\pi$, in an initially empty block search tree structure, i.e. with ascending priority values $\pi(x)$.
The actual algorithms for dynamic updates are discussed in Section~\ref{sec:algo-updates}.
Next we describe the basic tree structure that supports both, reporting queries (i.e. range search) and (order-)statistic queries (i.e. range counting and searching the $k$-th smallest value).
In Section~\ref{sec:algo-updates} we also discuss how to omit explicitly storing subtree weights, which allows improved update bounds for the case that statistic queries are not required in applications.
Next we define the structure of our block search tree and state its invariants.

Let array size $\alpha \geq 1$ and buffer threshold $\beta \geq 0$ be fixed integer parameters.
Every RBST block node stores a fixed-size array for $\alpha$ keys and $\alpha+1$ child pointers, a parent pointer, and its distance from the root.
As label for the UR of a block-node, we use the key-value $x$ of minimum priority $\pi(x)$ from those keys contained in the node's array.
Every child pointer is organized as a pair, storing both pointer to the child block and the total number of keys $s\geq 1$ contained in that subtree.
(E.g. the weight of a node's subtree is immediately known to the parent, without reading the respective child.)
All internal blocks of an RBST are full, i.e. they store exactly $\alpha$ keys.
Any block has one of two possible states, either \emph{non-buffering}~($s\geq \beta$) or \emph{buffering}~($s<\beta$).
Though $(\alpha,\eps)$-RBSTs will use $\beta=\Theta(\alpha^2/\eps)$,
we first give the definition of the primary tree without the use of buffers ($\beta=0$), i.e. every block remains non-buffering.

\begin{figure} \centering
    \includegraphics[width=\textwidth]{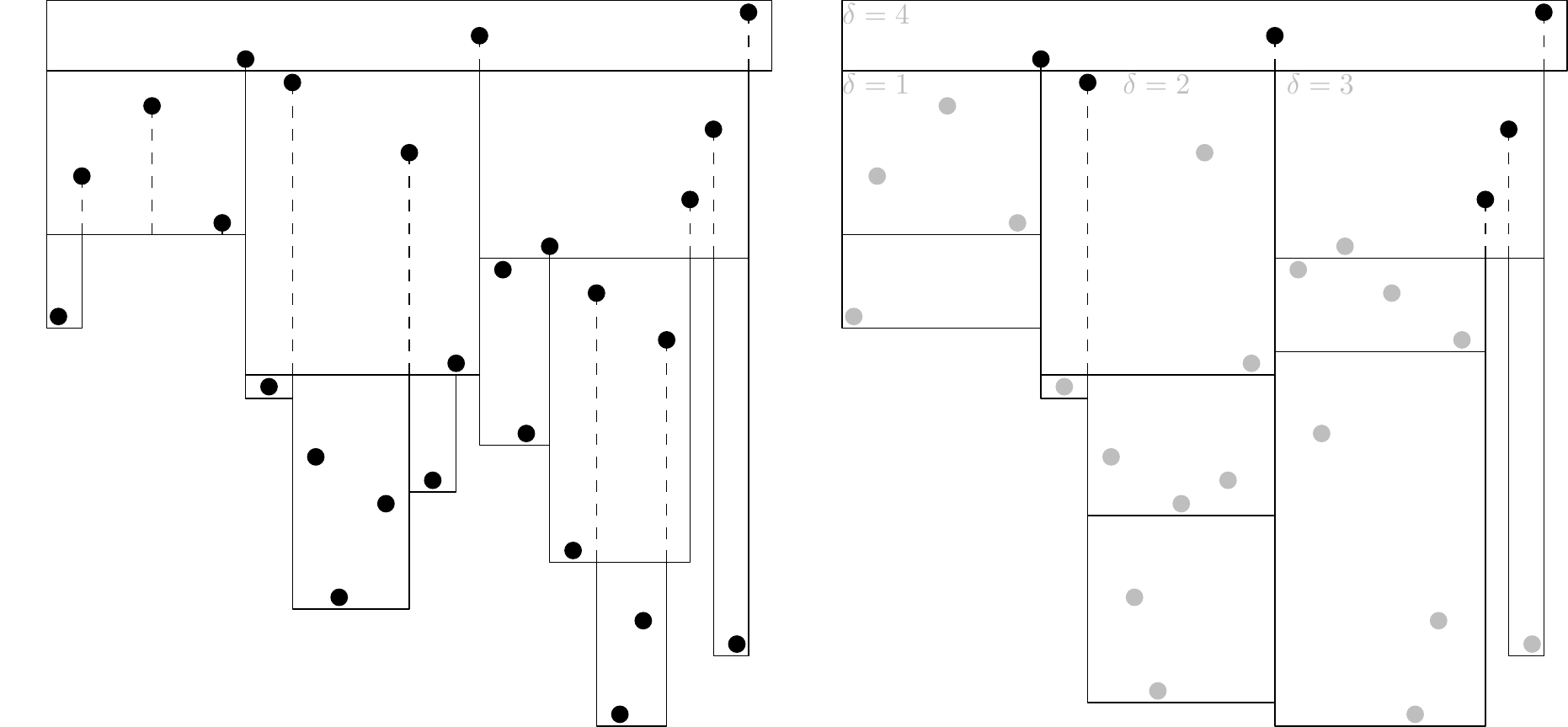}
    \caption{Cartesian representations of two RBSTs that shows the $i$-th largest key $x_i \in X$ with priority value $\pi(i)$ as point $(i,-\pi(i))$ in the plane ($n=26$ and $\alpha=3$).
    The active separators of each block $v$ are shown as vertical lines and the horizontal lines $p^\dagger(v)$.
    The left tree has $\beta=0$ and the right tree has buffering subtrees.
    Note that all, except the last block, on a root-to-leaf path are full.}
    \label{fig:construction}
\end{figure}

For $\beta=0$, the trees are defined by the following incremental process that inserts the keys in ascending order of their priority $\pi(\cdot)$.
Starting at the root block, the insertion of key $x$ first locates the leaf block of the tree using the search tree property.
If the leaf is non-full, $x$ is emplaced in the sorted key array of the node.
Otherwise, a new child block is allocated (according to the search tree property) and $x$ is stored therein.
Thus, any internal block stores those $\alpha$ keys that have the smallest priority values in its subtree\footnote{
For example, $\alpha=1$ and $\beta=0$ yields the well-known Treap and $\alpha>1$ search trees with fan-out $\alpha+1$.
}.
See Figure~\ref{fig:construction} (left) for an example on $26$ keys with $\beta=0$ that consists $12$ blocks, which demonstrates that leaves may merely store one key in their block.
Our tree structure addresses this issue as follows.

For $\beta>0$, the main idea is to delay generating new leaves until the subtree contains a sufficiently large number of keys.
To this end, we use secondary structures that we call \emph{buffers}, that replace the storage layout of all those subtrees that contain $\leq\alpha+\beta$ keys. 
Thus, all remaining block nodes of the primary structure are full and their children are either a primary block or a buffer.
There are several possible ways for UR organization of buffers, for example using a list of at most $\ceil{(\alpha+\beta)/\alpha} = \O(\alpha/\eps)$ blocks that are sorted by key values.
We propose the following UR structure that will result in stronger bounds.

\subparagraph*{Secondary UR Trees: Buffers with weight proportional fan-out}
Our buffers are search trees that consists of nodes similar to the primary tree which also store the keys with the $\alpha$ smallest priority values, from the subtree, in their block. 
However, the fan-out of internal nodes varies based on the number of keys $n\leq \alpha+\beta$ in the subtree.
Define $\delta(n)=\min\{\alpha+1,\max\{1,\ceil{  \frac{n-\alpha}{\rho}  }\}\}$ as the fan-out bound for a subtree of weight $n\leq \alpha+\beta=:\alpha + (\alpha+1)\rho$. 
We defer the calibration of the parameter $\rho=\Theta(\alpha/\eps)$ to Section~\ref{sec:size}.
To obtain UR, the buffer layout is again defined recursively solely based on the set of keys in the buffer and the random permutation $\pi$ that maps them to priority values.

For $\delta=1$, the subtree root has at most one child, which yields a list of blocks.
The keys inside each block are stored in ascending key order, and the blocks in the list have ascending priority values.
That is the first block contains those keys with the $\alpha$ smallest priorities, the second block contains, from the remaining $n-\alpha$ keys, those  with the $\alpha$ smallest priorities, and so forth.

For $\delta \in [2,\alpha+1]$, we also store the keys with the $\alpha$ smallest priorities in the root. 
We call those keys with the $\delta-1$ smallest priority values the \emph{active separators} of this block.
The remaining $n-\alpha$ keys in the subtree are partitioned in $\delta$ sets of key values, using the active separators only.
Each of these key sets is organized recursively, in a buffer, and the resulting trees are referenced as the children of the active separators.

The right part of Figure~\ref{fig:construction} gives an example of the $(\alpha, \eps)$-RBST on the same set of $26$ points that consists of $11$ blocks. 
Some remarks on the definition of primary and secondary RBST nodes are in order.
The buffer is UR since the storage layout only depends on the priority order $\pi$ of the keys, the number of keys in the buffer, and the order of the key values in it.
Note that summing the state values of the child pointers yields $s_1+\ldots+s_\delta + \alpha = n$ the weight of the subtree, without additional block reads.
Moreover, whenever an update in the secondary structure brings the buffer's root above the threshold, i.e.  $s_1+\ldots+s_{\alpha+1} > \beta$, we have that its root contains those keys with the $\alpha$ smallest priorities form the set and all keys are active separators, i.e. the buffer root immediately provides both invariants of blocks from the primary tree. 

Note that this definition of RBSTs not only naturally contains the randomized search trees of block size $\alpha$, but also allows a smooth trade-off towards a simple list of blocks (respectively $\eps=\infty$ and $\eps=0$).
Though leaves are non-empty, it is possible that there exist leaves with load as low as $1/\alpha$ in RBSTs.
Despite its procedural nature, above's definition of the block search tree does not yield a space and I/O efficient algorithm to actually construct and maintain RBSTs.
The reminder of this section specifies the algorithms for searching, dynamic insertion, and dynamic deletion, which we use for computing RBSTs.
Our analysis is presented in the Sections~\ref{sec:search-and-depth},~\ref{sec:size}, and~\ref{sec:efficient-updates}.

\paragraph*{Successor and Range Search in $(\alpha,\eps)$-RBSTs}
As with ordinary B-Trees, we determine with binary search on the array the successor of search value $q$, i.e. the smallest active separator key in the block that has a value of at least $q$, and follow the respective child pointer. 
If the current search node is buffering, i.e. has fewer than $\alpha$ active separators, we also check the set of non-active keys in the block during the search in a local MM variable to the find the successor of $q$.
The search descends until the list, i.e. block nodes with fan-out $\leq 1$, in the respective RBST buffer is found.
Since the lists are sorted by priority, we check all keys, by scanning the $\O(1/\eps)$ blocks of the list, to determine the successor of $q$.
This point-search extends to query-ranges in the ordinary way by considering the two search paths of the range's boundary $[q,q']$.

To summarize, successor and range search occupies $\O(1)$ blocks in main memory, does not write blocks, and the number of block-reads is $\O(D)$, where $D$ is the depth of the RBST.

\subsection{Insertion and Deletion via Partial-Rebuilds}
\label{sec:algo-updates}

As with the Treaps, the tree shape of $(\alpha,\eps)$-RBSTs is solely determined by the permutation $\pi$ of the keys.
For $X \cup \{x\}=X'$, any update method that transforms $T_X(\pi)$ into $T_{X'}(\pi)$ and vice-versa are suitable algorithms for insertions and deletions.
Unlike Treaps, that use rotations to always perform insertions and deletions at the leaf level, rotations in block search trees seem challenging to realize and analyze. 
We propose an update algorithm that rebuilds subtrees in a top-down fashion, which has some vague similarities to the re-balancing mechanism of Scapegoat (Binary Search) Trees.

A naïve update algorithm may seek a full-rebuild of the entire subtree of the block whose array is subject to a key insertion or deletion.
This also allows to maintain the nodes' distance from the root (for certifying search performance).
The main difficulty for update algorithms is to achieve I/O-bounds for EM-access that are (near) proportional to the structural change in the trees, \emph{while} having an $\O(1)$ worst-case bound on the temporary blocks needed in main memory.
The \emph{partial}-rebuild algorithm that we introduce in this section has bounds on the number of block-reads (Is) and block-writes (Os) in terms of the structural change of the update, the expected structural change of RBST updates is analyzed in Section~\ref{sec:efficient-updates}.

\begin{observation} \label{obs:algo}
Let $\pi$ be a permutation on the key universe $U$ and $X' = X \cup \{x\} \subseteq U$ the keys in the tree after and before the insertion of $x$.
Let $m$ be the number of blocks in RBST $T_X(\pi)$ that are different from the blocks in RBST $T_{X'}(\pi)$ and $m'$ the number of blocks in $T_{X'}(\pi)$ that are different from the blocks in $T_{X}(\pi)$.
Let $D$ and $D'$ be the height of the subtrees in $T_X(\pi)$ and $T_{X'}(\pi)$ that contains the those blocks.
The update algorithm in this section performs $\O(m+m')$ block-writes to EM, reads $\O(m'+ D' \cdot  m)$ blocks, and uses $\O(1)$ blocks of temporary MM storage during the update.
\end{observation}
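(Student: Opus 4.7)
The plan is to specify a concrete top-down partial-rebuild and then account its I/Os by a structural argument. For an insertion of $x$, the algorithm first locates the topmost block $r$ where $T_X(\pi)$ and $T_{X'}(\pi)$ disagree by a single root-to-leaf search that compares $\pi(x)$ with the priorities along the search path; this uses $\O(D')$ block-reads in $\O(1)$ MM blocks, and deletion is symmetric. Then the subtree of $r$ in $T_{X'}(\pi)$ is rebuilt in pre-order, writing each new block to EM exactly once at the moment its contents are finalized.

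To produce the $\alpha$ keys of a single new block $b'$, I would stream the corresponding portion of the old subtree, keeping in MM only the old block currently being read and a priority-heap of at most $\alpha$ candidates confined to the fixed key-range of $b'$. Because RBSTs store keys in priority order top-down, the scan prunes any old subtree whose root priority already exceeds the heap's current maximum; hence the scan never needs to descend past $b'$'s priority-level in the old tree. Subtree weights for the child-pointer records of each new block are returned from the recursive constructions of its children, so no extra passes over the rebuilt subtree are required. Finally, the $m$ old blocks are marked free using $\O(m)$ additional writes.

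The I/O bounds then follow by direct accounting. Each of the $m'$ new blocks contributes one write and each of the $m$ old blocks one freeing write, giving $\O(m+m')$ writes. For reads, an old block is inspected during the construction of a new block $b'$ only if it lies in $b'$'s key-range and carries a key of priority below the current heap maximum; the new blocks satisfying both conditions for a fixed old block form a single root-to-node chain in the changed subtree of $T_{X'}(\pi)$, so each old block is read at most $\O(D')$ times, contributing $\O(D' \cdot m)$ reads on the old tree. Traversal of the new subtree for navigation together with the updates of parent pointers and root-distances adds $\O(m')$ further reads, which also absorbs the initial $\O(D')$ descent since the changed subtree contains a root-to-leaf chain of $D'{+}1$ changed blocks.

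The step I expect to be the main obstacle is verifying the pruning invariant that limits the per-block read count: one must show that the old blocks examined while building a single $b'$ form a subchain of the unique old ancestors of the keys that ultimately land in $b'$, so that the total load on a fixed old block over all new-block constructions is $\O(D')$ rather than a pessimistic $\O(m')$. The argument relies on the top-down priority property of RBSTs together with the fact that $b'$'s key-range is a contiguous interval delimited by active separators of its pre-order ancestors already rebuilt, which keeps the pruning and the range-test consistent throughout the streaming reconstruction.
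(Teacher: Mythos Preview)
Your overall plan matches the paper's partial-rebuild: locate the top-most changed block, rebuild its subtree top-down in auxiliary storage by extracting, for each new block, the $\alpha$ smallest-priority keys within its key range from the old subtree, then swap into UR memory and free the obsolete blocks. Two points need correction, though.

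The priority-based pruning you add is unnecessary for the read bound, and the charging argument you sketch is not correct as stated. The paper's \texttt{top}$(k,v,(\ell,r))$ is a \emph{naive} range search that visits every old descendant of $v$ whose key range intersects $(\ell,r)$, with no priority cutoff. The $\O(D'\cdot m)$ read bound then follows by fixing an old block $u$ with key range $[a,b]$ and counting the new blocks whose \texttt{top} call visits it: the query range $(\ell,r)$ intersects $[a,b]$ only if it contains $a$, contains $b$, or contains all of $[a,b]$, and the new blocks whose range contains a fixed point form a single root-to-leaf chain in the output tree, so $u$ is read by at most two such chains, i.e.\ $\O(D')$ times. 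Your claim of a \emph{single} chain is not justified (two sibling new blocks can straddle opposite endpoints of $u$), and the invariant in your last paragraph---that the old blocks read while building one $b'$ form a chain of old ancestors of the keys landing in $b'$---is neither true in general nor the quantity one needs to bound.

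Finally, returning subtree weights from the child recursions means a parent block is finalized only after all its children, which forces either an $\O(D')$-block recursion stack in MM or a second write per new block. The paper avoids this by having \texttt{top} also return the range counts $w_0,\ldots,w_k$ between the reported separators, so every child weight is known \emph{before} recursing, and parent pointers in the auxiliary tree replace the explicit stack to give the $\O(1)$ MM bound.
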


Our rebuild-algorithm aims at minimizing the number of block-writes in an update, using a greedy top-down approach that builds the subtree of $T_{X'}$ in auxiliary EM storage while reading the keys in $T_X$ from UR EM.
On completion, we delete the $m$ obsolete blocks from UR memory and write the $m'$ result blocks back to UR to obtain the final tree $T_{X'}$.
This way, our update algorithm will only require $\O(1)$ blocks of temporary storage in MM (and still supports searches while rebuilding of the subtree is in progress).
The basic idea is as follows. 
Given a designated separator-interval $(\ell(v'), r(v'))$ when assembling block $v'$ for writing to auxiliary EM, 
we determine those keys with the $\alpha$-smallest priority values by searching in $T_X$, 
place them in the array of $v'$,
determine the fan-out $\delta_{v'}$ of and the active separators of block node $v'$, and 
recursively construct the remaining subtrees for those keys within each section between active separators of $v'$.
Eventually, we free the obsolete blocks of $T_X$ form UR and move the result subtree from the auxiliary to UR memory.
Next we discuss the possible rebuild cases when inserting a new key $x$ to the tree, their realization is discussed thereafter.
For a block node $v$, let $X(v)$ denote the set of keys stored in the subtree of $v$, $B(v)\subseteq X(v)$ the keys stored in the array of $v$ itself,  
$p^*(v)=\min \{ \pi(y):y \in B(v) \}$, and $p^\dagger(v)=\max\{\pi(y):y \in B(v)\}$.
(E.g. the key $y$ with $\pi(y)=p^*(v)$ is the UR label of the block $v$.)

Consider the search path for $x$ in the RBST $T_X$.
Starting at the root node, we stop at the first node $v$ that
i)  must store $x$ in its array (i.e. $\pi(x)\in [p^*(v),p^\dagger(v)]$) or 
ii) that requires rebuilding (i.e. must increase its fan-out).
To check for case ii), we use the augmented child references of $v$ to determine the number of keys in the result subtree $v'$.
Thus, both fan-outs, $\delta_v$ pre-insertion and $\delta_{v'}$ post-insertion, are known without additional block reads from the stored subtree weights\footnote{To avoid storing and updating subtree weights explicitly, the subtree weight can be computed bottom-up along the search path of $x$ until the node $v$, containing the $m'$ blocks, is found.}.

If neither i) nor ii) occurs, the search halts in a buffer list ($\delta_v =\delta_{v'}\leq 1$) and we simply insert the key and rewrite all subsequent blocks of the list in one linear pass.
Note that, if the list is omitted (as there are no keys in the active separator interval), the insertion must simply allocate a new block that solely stores $x$ to generate the output RBST.
It remains to specify the insertion algorithm for case i) and ii).
See Figure~\ref{fig:rebuild-cases} for an overview of all cases.

\begin{figure}[] \centering
    \includegraphics[width=\columnwidth]{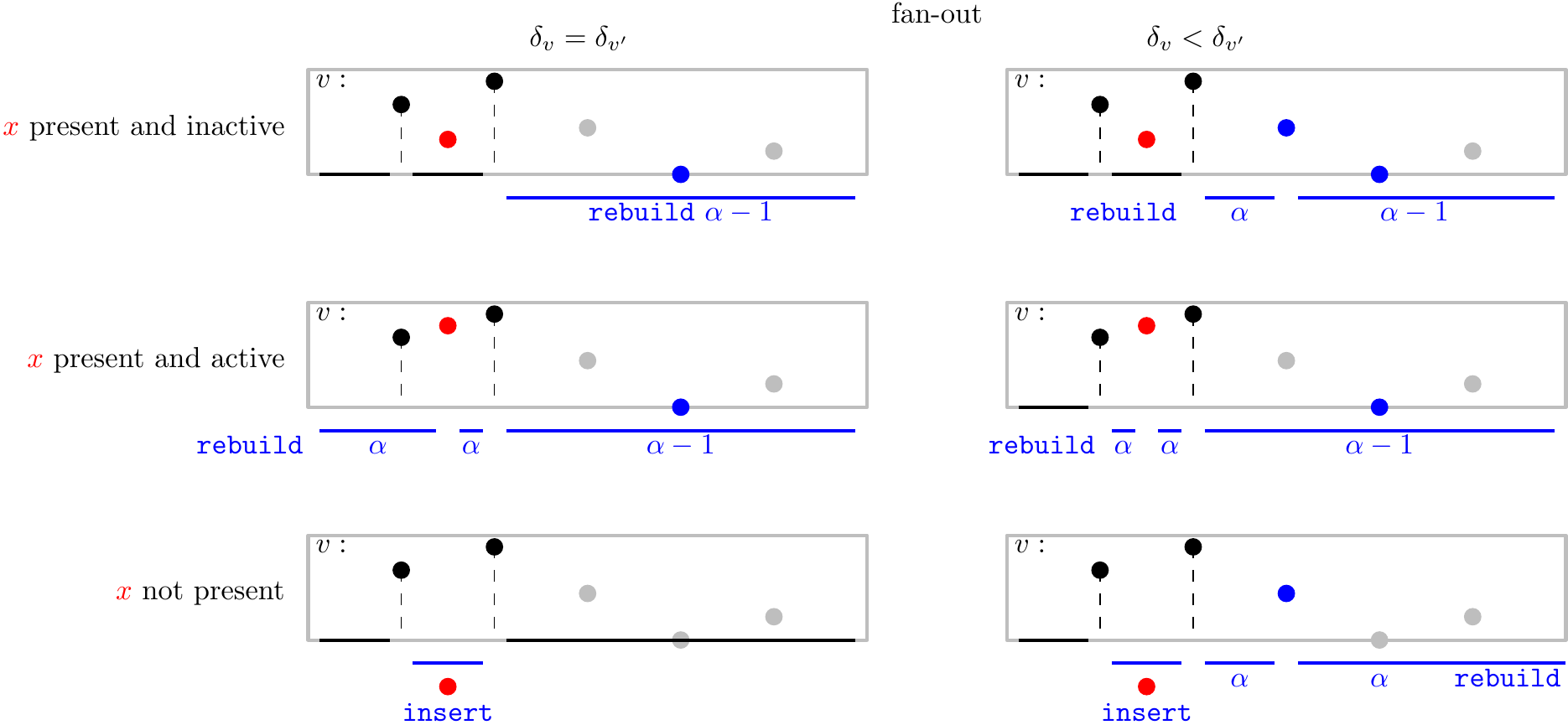}
    \caption{Rebuild cases for insertion of key $x$ (red) in block $v$ of an RBST.}
    \label{fig:rebuild-cases}
\end{figure}
 
In ii), $v$ is a buffer node that must increase its fan-out.
It suffices to build two trees from one subtree, each containing the top $\alpha$ keys in their root.
If the new key $x$ is contained in this interval, we pass $x$ as additional `carry' parameter to the procedure.
If $x$ is not contained in this interval, the insertion continues on the respective child of $v$ that must contain $x$. 
(See Figure~\ref{fig:rebuild-cases} bottom right.)
Note that at most three subtrees of $v'$ are built in auxiliary EM and that reads occur in at most two subtrees of $v$, regardless of the actual value of $\alpha$.

In i), $v$ is an internal node that stores $\alpha$ keys and there are four cases; depending on if $x$ is an active separator in $v'$ and if the fan-out stays or increases (see Figure~\ref{fig:rebuild-cases}). 
In all cases, it suffices to rebuild at most three subtrees.
Two trees for the two separator-intervals that are incident to $x$ and one tree that contains the two subtrees of the separator-intervals incident to the key $y$ that is displaced from block $v$ by the insertion of $x$, i.e. $\pi(y)=p^\dagger(v)$.
Note that deciding if insertion of $x$ in a block $v$ that is buffering leads to an increased fan-out does not require additional block reads, as the weight of the respective subtree is stored together with the child reference in $v$.

Note that deleting a key $x$ from an RBST has the same cases for rebuilding sections of keys between active separators.
Moreover, if a deletion leads to a buffering block ($\delta_{v'}<\alpha+1$), then all its children are buffering and thus store their subtree weight together with the reference to it. 
Thus, determining the fan-out for deletions also requires no additional block reads.

To complete our update algorithms, it remains to specify our top-down greedy rebuild procedure for a given interval between two active separators.
First, we isolate the task of finding the relevant keys, needed for writing the next block to auxiliary EM within the following range-search function \verb-top-$(k, v, (\ell,r))$.
Given a subtree root $v$ in $T_X$, a positive integer $k \leq \alpha$, and query range $(\ell,r)$, we report those keys from $Y = X(v) \cap (\ell,r)$ that have the $k$ smallest priority values and their respective subtree weights $w_0+\ldots+ w_k =|Y|$ based on a naïve search that visits (in ascending key order) all descendants of $v$ that overlap with the query range.

For an $\alpha$ or an $\alpha-1$ rebuild, we run the respective \verb-top- query on the designated interval range and
check if $x$ is stored in the output block or beneath it. 
Then we determine the fan-out from the range count results $\{w_i\}$, which determines the active separators in the output block.
We allocate one block in auxiliary EM for each non-empty child-reference and set their parent points.
Finally, we recursively rebuild the subtrees for the intervals of the active sections, passing them the reference to the subtree root in $T_X$ that holds all necessary keys for rebuilding its subtree.
Note that for this form of recursion, using parent pointers allows to implement the rebuild procedure such that only a constant number of temporary blocks of MM storage suffice.
After the subtrees are build in the auxiliary EM, we delete the $m$ obsolete blocks from UR and move the $m'$ blocks to UR to obtain the final tree $T'$.

Clearly, the number of block writes is $\O(m+m')$ and the number of block reads is $\O(D'\cdot m)$, from a simple charging argument.
That is, any one block $u$ of the $m$ blocks in old tree is only read by the \verb-top- call if the key range of $u$ intersects the query range $(\ell,r)$ of the \verb-top- call.
Consequently $(\ell,r)$ contains the smallest key, the largest key, or all keys of non-empty $u$.
The number of either such reads of $u$ is bounded by $D'$, since those blocks from the output tree that issue the call to \verb-top- have intervals that are contained in each other by the search tree property.
The expected I/O bounds of our partial rebuild algorithm will follow from our analysis of the depth and size of RBSTs in the next sections.

\section{Bounds for Searching and the Subtree Weight in RBSTs}
\label{sec:search-and-depth}
Since successful tree searches terminate earlier, it suffices to analyze the query cost for keys $q \notin X$, i.e. unsuccessful searches.
In this section, we bound the expected block reads for searching for some fixed $q$ in the block trees $T_X$, where the randomness is over the permutations $\pi \in \Perm(n)$, i.e. the set of bijections $\pi : X \to \{1,\ldots, n\}$.

Consider the sequence $v_1,v_2,\ldots$ of blocks of an RBST on the search path from the root to some fixed $q$.
Since the subtree weight is a (strictly) decreasing function, we have that the fan-out $\delta$ of the nodes is a non-increasing function that eventually assumes a value $\leq 1$.
From the definition of $\delta$, we have that there are (in the worst case) at most $O(1/\eps)$ blocks in the search path with $\delta = 1$.
Thus, it suffices to bound the expected number of internal blocks with $\delta \geq 2$, which all have the search tree property $[\ell(v_i),r(v_i)]\supset(\ell(v_{i+1}),r(v_{i+1})) \ni q$.
Our bound on the number of block reads of the range search in RBSTs will use the results of the next section and is thus presented at the end of it.

\begin{lemma}\label{lem:point-depth}
    Let $X$ be a set of $n$ keys, $q \in \mathbb{R} \setminus X$, and $2 \leq \alpha$ an integer.
    The expected value of random variable $D_q$, i.e. the number of primary tree nodes that have $q$ in their interval, is at most $ \E[D_q] \leq 5 \log_{\alpha} n $.
    The expected number of secondary tree notes (i.e. $\delta<\alpha+1$) that have $q$ in their interval is $\O(1/\eps)$.
    In particular, the expected number of blocks in the search path of $q$ is $\O( \eps^{-1} + \log_\alpha n)$.
\end{lemma}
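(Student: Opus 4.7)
The plan is to bound the primary-tree part of the search path (blocks with $\delta=\alpha+1$) and the buffer part (blocks with $\delta\le\alpha$) separately, then add them. Let $v_0,v_1,\ldots$ be the blocks on $q$'s search path and let $w_i$ denote the number of keys in the subtree rooted at $v_i$, so $w_0=n$ and the sequence strictly decreases.

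\emph{Primary tree.} The main step is to show, conditional on $w_i=m$, that
\[
\E[w_{i+1}\mid w_i=m]\;\le\;\frac{2(m-\alpha)}{\alpha+1}\;\le\;\frac{2m}{\alpha+1}.
\]
By deferred decisions, conditional on the set of $m$ keys in $v_i$'s subtree, the $\alpha$ active separators (the $\alpha$ smallest-priority keys in the subtree, which in a primary node coincide with the whole block) form a uniformly random $\alpha$-subset of those $m$ keys. By linearity of expectation, the expected number of non-separator keys between $q$ and the nearest separator on a single side is $\sum_{d\ge 1}\binom{m-d}{\alpha}/\binom{m}{\alpha}=(m-\alpha)/(\alpha+1)$ by the hockey-stick identity; multiplying by $2$ for the two sides gives the bound. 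Iterating yields $\E[w_i]\le \bigl(2/(\alpha+1)\bigr)^{\!i}\,n$. Since $v_i$ is primary precisely when $w_i=\Omega(\beta)$, Markov gives $\Pr[v_i\text{ primary}]\le \E[w_i]/\Omega(\beta)$; splitting $\sum_i\Pr[v_i\text{ primary}]$ at $i^\star:=\log_{(\alpha+1)/2}(n/\beta)$ and summing the geometric tail produces $\E[D_q]\le i^\star+O(1)$. A worst-case change-of-base check (at $\alpha=2$ one has $\log_{3/2}n\le 2\log_2 n$) yields $\E[D_q]\le 5\log_\alpha n$ for all $\alpha\ge 2$.

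\emph{Buffer.} After the primary phase $w_i\le \alpha+\beta=\O(\alpha^2/\eps)$. Split the buffer part of the path into the list tail ($\delta\le 1$) and the non-list blocks ($\delta\ge 2$). The list stores at most $\alpha+\rho=\O(\alpha/\eps)$ keys in blocks of size $\alpha$, contributing a deterministic $\O(\rho/\alpha)=\O(1/\eps)$ path blocks. For non-list buffer blocks, the same hypergeometric argument with $\delta-1$ active separators instead of $\alpha$ gives $\E[w_{i+1}\mid w_i=m]\le 2(m-\delta+1)/\delta$, from which one must subtract the expected share $\Theta\bigl((\alpha-\delta+1)/\delta\bigr)$ of non-active block keys that land in $q$'s region to get the child-subtree size. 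Using $\delta=\lceil (m-\alpha)/\rho\rceil$: in the $\delta=2$ band (weights in $(\alpha+\rho,\alpha+2\rho]$) each level shaves off at least $(\alpha+1)/2$ keys in expectation, so the band is traversed in $\O(\rho/\alpha)=\O(1/\eps)$ expected levels; bands with $\delta\ge 3$ shrink the weight by a factor bounded away from $1$ per level and collectively contribute only $\O(1)$ expected levels. Summing gives $\O(1/\eps)$ buffer blocks in expectation, and the grand total is $\O(\eps^{-1}+\log_\alpha n)$.

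\emph{Main obstacle.} The primary analysis is a clean geometric-shrinkage argument directly analogous to the classical Treap depth bound, so the delicate part is the $\delta=2$ band inside the buffer: the pure fan-out bound $\E[w']\le 2w/\delta$ collapses to $\E[w']\le w$ when $\delta=2$ and provides no multiplicative decay. The trick is to charge each level in that band to the $\alpha-1$ non-active block keys it necessarily absorbs, and to amortize against the $\O(\rho)$ key budget of the band, which is what recovers the $\O(1/\eps)$ expected count.
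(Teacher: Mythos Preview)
Your primary-tree argument via geometric weight shrinkage is correct and is a genuinely different route from the paper's layer-partition proof. The paper partitions the priority range into layers $[\alpha^i,\alpha^{i+1})$ and bounds the number of path blocks per layer by a negative-binomial-type counting; your approach is arguably more direct and the change-of-base arithmetic for the constant $5$ goes through.

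The gap is in the buffer analysis, and you have misidentified the obstacle. Your $\delta=2$ drift argument is fine (indeed one gets the cleaner bound $\E[w'\mid w=m]\le m-\alpha$ directly, since the child subtree can contain none of the $\alpha$ block keys; your specific subtraction term $\Theta((\alpha-\delta+1)/\delta)$ is not the right formula but the conclusion survives). The problem is your claim that ``bands with $\delta\ge 3$ \ldots\ collectively contribute only $O(1)$ expected levels.'' The multiplicative shrinkage you invoke is only $\E[w'\mid w]\le 2w/\delta\le \tfrac{2}{3}w$ in that region, while the region itself spans weights from $\approx 2\rho$ up to $\approx(\alpha{+}1)\rho$, a factor of $\Theta(\alpha)$. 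A constant-ratio decay across a factor-$\alpha$ range gives $\Theta(\log\alpha)$ expected levels, not $O(1)$; and $\log\alpha$ is bounded neither by $O(1/\eps)$ nor by $O(\log_\alpha n)$, so both the second and the third assertion of the lemma fail with your argument as written.

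The repair the paper uses is additive rather than multiplicative: for \emph{every} full buffer block one has $w'\le w-\alpha$ deterministically (the $\alpha$ block keys are excluded from the child), and after a single buffer step the expected weight already satisfies $\E[m_1]\le 2(m_0-\alpha)/\delta(m_0)\le 2\rho=O(\alpha/\eps)$ because $\delta(m_0)\ge(m_0-\alpha)/\rho$. Hence the number $N$ of buffer levels with $\delta\ge 2$ obeys $N\le 1+m_1/\alpha$ deterministically, and $\E[N]\le 1+\E[m_1]/\alpha=O(1/\eps)$.
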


The block structure of RBSTs do not allow a clear extension of Seidel's analysis of the Treap based on `Ancestor-Events'~\cite{SeidelA96} or the backward-analysis of point-location cost in geometric search structures~\cite[Chapter~6.4]{dutch-book}.
Our proof technique is vaguely inspired by a top-down analysis technique of randomized algorithms for \emph{static} problems that consider `average conflict list' size.
However, several new ideas are needed to analyze a dynamic setting where the location $q$ is user-specified and not randomly selected by the algorithm.
Our proof uses a partition in layers whose size increase exponentially with factor $\alpha$ and a bidirectional counting process of keys in a layer of that partition. 

\begin{figure} \centering
    \includegraphics[width=.5\textwidth]{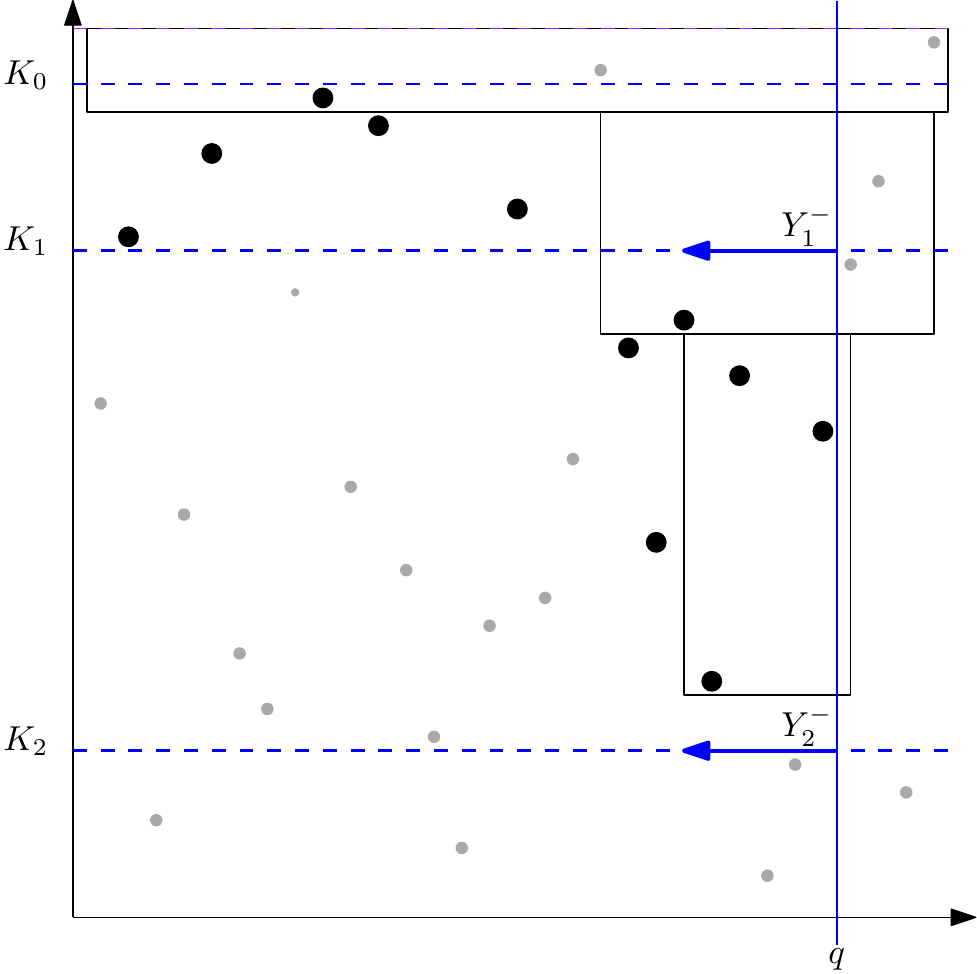}
    \caption{Illustration of the layer partitions $K_i$ and the number of keys counted by $Y_1^-=5$ and $Y_2^-=6$ (black) in the proof of Lemma~\ref{lem:point-depth}.}
    \label{fig:depth}
\end{figure}

\begin{proof}
Partition $\{1,\ldots,n\}$ with intervals of the form $[\alpha^i,\alpha^{i+1})$ for indices $0 \leq i \leq  \floor{ \log_\alpha n} $.
E.g. permutation $\pi$ induces an assignment of keys $x \in X$ to an unique layer index, i.e. $i$ with $\pi(x) \in [\alpha^i,\alpha^{i+1})$.
For internal node $v$ let $p^*(v)=\min \{ \pi(x):x\text{ is stored in }v \}$ and $p^\dagger(v)$ is the maximum over the set.
Note that $p^*(v) + \alpha -1 \le p^\dagger(v)$, since every primary node contains exactly $\alpha$ keys.
Thus, $D_q$ counts nodes $v$ that either have both $p^*(v),p^\dagger(v) \in [\alpha^i,\alpha^{i+1})$ or have that $p^\dagger(v)$ has a larger layer index than $i$.
We bound the expected number of nodes of the first kind, since there are in the worst case at most $\floor{\log_\alpha n}$ from the second kind.
Defining for every layer index $i$ a random variable $V_i$ that counts the tree nodes in that layer
$$
V_i(\pi) = 
\Big| 
\Big\{ 
    v \in T_X(\pi)~:~ \ell(v) < q < r(v) \text{ and } p^*(v),p^\dagger(v) \in [\alpha^i,\alpha^{i+1})
\Big\}
\Big| \quad,
$$
we have that 
$ D_q \leq \floor{\log_\alpha n} + V$, where $V = \sum_{i \ge 0} V_i$ is the total number of blocks of the first kind.
The reminder of the proof shows expectation bounds for $V_i$.

Let $K_i =\{ x \in X:\pi(x)<\alpha^{i+1} \}$ for each index~$i$, $K_i^- = \{x \in K_ i : x < q\}$, and $K_i^+=K_ i\setminus K_i^-$.
Thus, $K_i \subset K_{i+1}$, and $|K_i|=\alpha^{i+1}-1$.
Define $Y_i^-$ to be the number of consecutive keys form $K_i^-$ that are less than $q$ but not contained in $K_{i-1}$.
Analogously, random variable $Y_ i^+$ counts those larger than $q$ and 
$Y_i := Y^-_i + Y^+_i$ is the number of consecutive keys of $K_i$, whose range contain $q$, but do not contain elements from $K_{i-1}$.
Since all keys in a primary tree node are active separators, we have $V_i(\pi) \leq Y_i(\pi)/\alpha $ for every $\pi \in \Perm(n)$.

Next we bound $\E[Y_i^-]$ based on a sequence of binary events:
Starting at $q$, we consider the elements $x \in K_i^-$ in descending key-order and count as `successes' if $x \in K_i\setminus K_{i-1}$ until one failure ($x \in K_{i-1}$) is observed.
If $K_i^-$ has no more elements, the experiment continues on $K_i^+$ in ascending key order.
Defining $Z_i$ as the number of successes after termination, we have 
$Y_i^-(\pi) \leq Z_i(\pi)$.
The probability to obtain failure, after observing $j$ successes, is 
$\frac{|K_{i-1}|  }{|K_i|-j}$, which is at least $p_i:= \frac{|K_{i-1}|}{|K_i|}$ for all $j\geq 0$ and $i>0$.

Hence $\Pr[Z_i = j] \leq \Pr[Z'_i = j+1]$ where random variable $Z'_i\sim NB(1,p_i)$, i.e. the number of trials to obtain one failure in a sequence of independent Bernoulli trials with failure probability $p_i$.
Since $\E[Z'_ i]=1/p_i$, we have 
\begin{align}
    \E[Y_i^-] \leq \E[Z_i] < \E[Z'_i] = 1/p_i = \frac{\alpha^{i+1}-1}{\alpha^i-1}< \frac{\alpha}{1-1/\alpha^i}~.
\end{align}
We thus have $\E[V_i] \leq \E[Y_i]/\alpha < \tfrac{2}{\alpha}\tfrac{\alpha}{1-1/\alpha^i} \leq 4$ for all $\alpha\geq 2$, which shows the lemma's statement for the primary tree nodes (that have $\delta=\alpha+1$).

Since there are in the worst case $\O(1/\eps)$ nodes with fan-out $\delta \leq 1$ on a search path, it remains to bound the expected number of secondary nodes with $2 \leq \delta(n') \leq \alpha$ on a search path, where $n' := |X(v)|$ is the subtree weight of top-most buffering node $v$ on the search path.
For any fixed $n'$, this random variable only depends on the relative order of the $n'$ keys, which are uniform from $\Perm(n')$.
Since the expected number of keys in either of the $\delta(n')$ sections is $(n'-\alpha)/\delta(n')$, the expected number of keys in section of $q$ has an upper bounded of the form $ (n'-\alpha)/\delta(n') =\O(\alpha/\eps)$.
Since the bound holds for all $n'$, the bound holds unconditional. 
Consequently, the lemma's $\O(1/\eps)$ expectation bound on the nodes follows from the fact that all secondary nodes (with $\delta\geq 2$) store exactly $\alpha$ keys.
\end{proof}

We will use this technique again in the analysis of the structural change in Section~\ref{sec:efficient-updates}.
From the Layer Partition in our previous proof, we obtain an upper bound on the expected weight of a subtree, subject to an update, of the form 
$\sum_i \frac{\alpha^{i+1}-\alpha^i}{n} \frac{n-\alpha^i+1}{\alpha^i-1} =\O(\alpha \log_\alpha n)$.
In the reminder of the paper, we will derive the tools to show that the expected structural change of an update, i.e. the number of block writes, is bounded within a $(1/\alpha^2)$-factor of this bound.

\section{Bounds on Size using a Top-Down Analysis}
\label{sec:size}

Our analysis will frequently use the following characterization of the partitions of a set $X$ of $n$ keys that are induced by the $\alpha$ elements of the smallest priority values from the set.

\begin{observation} \label{obs:Xi-characterization}
There is a bijective mapping between the partitions on $n$ keys, induced by the first $\alpha$ keys from $\pi \in \Perm(n)$, and the solutions to the equation
    $X_1 + X_2 + \dots + X_{\alpha+1} = n - \alpha $,
 where the variables $X_i$ are non-negative integers.
 This bijection implies that the solutions of the equation happen with the same probability.
\end{observation}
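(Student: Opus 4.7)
The plan is to make the bijection explicit by identifying partitions with the positions of the $\alpha$ separators in the key-sorted order, and then to invoke the standard stars-and-bars correspondence so that a uniform-permutation argument can transfer uniformity across the bijection.

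First I would fix notation: let the keys of $X$ be written in key-order as $x_1 < x_2 < \ldots < x_n$. For any permutation $\pi \in \Perm(n)$, let $S(\pi) = \{i_1 < i_2 < \ldots < i_\alpha\} \subseteq \{1, \ldots, n\}$ be the set of positions (in this key-order) of the $\alpha$ keys achieving the $\alpha$ smallest values of $\pi$. These $\alpha$ keys are precisely the separators that split the remaining $n-\alpha$ keys into $\alpha+1$ consecutive groups. Defining $X_j := i_j - i_{j-1} - 1$ for $j = 1,\ldots,\alpha+1$ (with the conventions $i_0 := 0$ and $i_{\alpha+1} := n+1$) records the sizes of these groups, and by telescoping $\sum_j X_j = n - \alpha$ with each $X_j \ge 0$. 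I would then verify that $S \mapsto (X_1,\ldots,X_{\alpha+1})$ is a bijection between size-$\alpha$ subsets of $\{1,\ldots,n\}$ and non-negative integer solutions of the equation, which is precisely the classical stars-and-bars bijection; both sets have cardinality $\binom{n}{\alpha}$.

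Next I would use the uniformity of $\pi$. Since $\pi$ is drawn uniformly from $\Perm(n)$, the induced subset $S(\pi)$ is uniform over the $\binom{n}{\alpha}$ size-$\alpha$ subsets of $\{1,\ldots,n\}$: each such subset $S$ arises from exactly $\alpha!(n-\alpha)!$ permutations (choose any bijection from $S$ to the $\alpha$ smallest priority values and any bijection from the complement to the remaining priority values). Pushing this uniform distribution forward along the bijection gives probability $1/\binom{n}{\alpha}$ to each integer solution, which is what the observation asserts.

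The argument is essentially bookkeeping; the only point that needs a little care is the boundary convention in the definition of $X_1$ and $X_{\alpha+1}$ (the two unbounded outer intervals), and the observation that the uniform distribution on $\Perm(n)$ really does induce a uniform distribution on $\binom{\{1,\ldots,n\}}{\alpha}$ via $\pi \mapsto S(\pi)$ — a symmetry argument that does not depend on the specific key values. I would not expect any real obstacle, since neither the block-tree construction nor the secondary-buffer machinery enters here: the statement is purely about the partition induced by the $\alpha$ smallest-priority keys.
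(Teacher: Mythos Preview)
Your proposal is correct and matches the paper's viewpoint exactly: the paper also identifies the partition with the positions of the $\alpha$ smallest-priority keys in key-order and reads off the gap sizes $X_i$ (it illustrates this with the example $n=10$, $\alpha=3$, separators $\{4,7,8\}$ giving $(X_1,\ldots,X_4)=(3,2,0,2)$). The only difference is that the paper treats this as a self-evident observation with no formal argument, whereas you spell out the stars-and-bars bijection and the $\alpha!(n-\alpha)!$-to-one count explicitly; your write-up is a strict elaboration of what the paper leaves implicit.
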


In other word, $X_i$ is the number of keys in the $i$-th section beneath the root of $T_X$, where $1\leq i \leq \alpha+1$.
Thus, $X_i$ can be considered as the number of consecutive keys from $X$ that are between the $i$-th and $(i+1)$-th key stored in the root.
For example, in an RBST on the keys $\{1,\ldots,10\}$ and block size $\alpha=3$, a root block consisting of the key values $4$, $7$, $8$ is characterized by the assignment $X_1=3,~ X_2=2,~ X_3=0,~ X_4=2$.

Next we analyze the effect of our secondary structures, since the size of RBSTs without buffer ($\beta=0$) can be dramatically large.
For example, the expected number of blocks $\E[S]$ for subtrees of size $n=2\alpha$ is 
\begin{align*}
    \E[S] &= 1 + \Pr[X_1>0] + \dots + \Pr[X_{\alpha+1}>0]\\
    &= 1 + (\alpha+1)\left( 1-\binom{n-1}{\alpha-1} / \binom{n}{\alpha} \right)
    = 1 + (\alpha+1)(1-\alpha/n)
    = 1 + (\alpha+1)/2 \quad.
\end{align*}
Note that in this example, buffering stores the whole subtree using only one additional block.

\subsection{Buffering for UR trees with load-factor $1-\eps$}
\label{sec:buffering}

Our top-down analysis of the expected size, and thus load-factors, of $(\alpha,\eps)$-RBSTs frequently uses the following counting and index exchange arguments.

\begin{lemma}[Exchange]\label{lem:index-exchange}
We have $\Pr [X_i < c] = \Pr[X_j < c]$ for all
$i,j \in \{1,\ldots,\alpha + 1\}$.
\end{lemma}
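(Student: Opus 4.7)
The plan is to deduce the claim from a pure symmetry argument based on Observation~\ref{obs:Xi-characterization}. That observation identifies the joint distribution of $(X_1,\ldots,X_{\alpha+1})$ with the uniform distribution on the set $\mathcal S_{n,\alpha}$ of non-negative integer solutions to $X_1 + \cdots + X_{\alpha+1} = n-\alpha$ (of size $\binom{n}{\alpha}$ by stars and bars). Since this equation is invariant under any permutation of the index set $\{1,\ldots,\alpha+1\}$, the distribution on $\mathcal S_{n,\alpha}$ is exchangeable, which immediately yields equality of the marginals.

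First, I would fix indices $i, j \in \{1,\ldots,\alpha+1\}$ and introduce the transposition $\sigma = (i\ j)$ acting on tuples by $\sigma \cdot (x_1,\ldots,x_{\alpha+1}) = (x_{\sigma(1)},\ldots,x_{\sigma(\alpha+1)})$. Second, I would observe that $\sigma$ restricts to a bijection $\mathcal S_{n,\alpha} \to \mathcal S_{n,\alpha}$, since it merely reorders the summands of a fixed sum. Third, under this bijection a solution $(x_1,\ldots,x_{\alpha+1})$ satisfies $x_i < c$ if and only if its image has its $j$-th coordinate strictly less than $c$, so the subsets
\[
\{x \in \mathcal S_{n,\alpha}: x_i < c\} \quad\text{and}\quad \{x \in \mathcal S_{n,\alpha}: x_j < c\}
\]
are in bijection and therefore have the same cardinality. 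Combined with the uniformity provided by Observation~\ref{obs:Xi-characterization}, this gives $\Pr[X_i < c] = \Pr[X_j < c]$.

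There is no real obstacle here: the content of the lemma is entirely exchangeability of the induced partition sizes, and Observation~\ref{obs:Xi-characterization} has already done the substantive work of turning the randomness in $\pi$ into uniform randomness over a symmetric solution set. The only thing to be careful about is to invoke Observation~\ref{obs:Xi-characterization} explicitly when passing from the claim ``each solution has the same probability'' to the statement that arbitrary events on the tuple $(X_1,\ldots,X_{\alpha+1})$ are determined by counting solutions, so that the bijection above translates into equality of probabilities.
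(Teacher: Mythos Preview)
Your proof is correct. Both arguments rely on Observation~\ref{obs:Xi-characterization} to reduce to uniformity over the solution set $\mathcal S_{n,\alpha}$, but they diverge from there: the paper \emph{computes} the marginal explicitly, obtaining $\Pr[X_i\ge c]=\binom{n-c}{\alpha}/\binom{n}{\alpha}$ by counting solutions with $X_i\ge c$ (via the substitution $X_i'=X_i-c$), and then simply observes that this expression does not depend on $i$. Your route instead uses the transposition $(i\ j)$ as a bijection of $\mathcal S_{n,\alpha}$ onto itself, matching the two events without ever evaluating the probability. Your argument is a clean, purely structural proof of exchangeability; the paper's computation is slightly more work but has the advantage of producing the closed form $\Pr[X_i\ge c]=\binom{n-c}{\alpha}/\binom{n}{\alpha}$, which is exactly what is reused in Lemma~\ref{lem:range-nof-keys} and the subsequent size analysis.
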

\begin{proof}
Due to Observation \ref{obs:Xi-characterization}, each solution to $X_1 + \dots + X_{\alpha+1}= n -\alpha$ occurs with the same probability. Hence, we can calculate $\Pr[X_i\geq c]$ by counting the number of solutions where $X_i\geq c$ and dividing it by the total number of solutions.
Thus
\begin{align*}
    \Pr[X_i < c] = 1- \Pr[X_i \geq c] = 1- \frac{\binom{n-c}{\alpha}}{\binom{n}{\alpha}}
    = 1 - \Pr[X_j \geq c]
    = \Pr[X_j < c]~,
\end{align*}
as stated.
\end{proof}

Next, we show an upper bound for the size of an $(\alpha,\eps)$-RBST with $n$ keys, where the priorities are from a uniform distribution over the permutations in $\Perm(n)$.
Our analysis crucially relies on the basic fact that restricting $\Perm(n)$ on an arbitrary key subset of cardinality $n'<n$ yields the uniform distribution on $\Perm(n')$.
Random variable $S_n$ denotes the space, i.e. the number of blocks used by the RBST on a set of $n$ keys, $F_n$ denotes the number of full blocks, and $E_n := S_n - F_n$ the number of non-full blocks. 
Next we show a probability bound for the event that a given section, say $k$, contains a number of keys $X_k$ of a certain range.

\begin{lemma}\label{lem:range-nof-keys}
For any $1\leq  k\leq \alpha+1$ and $i<j$, we have
   $ 
    \sum_{x=i}^{j} \Pr[X_k=x] = \frac{\binom{n-i}{\alpha}-\binom{n-1-j}{\alpha}}{\binom{n}{\alpha}}
    $.
\end{lemma}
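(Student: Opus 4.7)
The plan is to reduce the statement to a telescoping sum by first extracting a clean closed form for the tail probability $\Pr[X_k \geq c]$, and then writing each atomic probability as a first difference.

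First, I would observe that the proof of Lemma~\ref{lem:index-exchange} already implicitly contains the identity $\Pr[X_k \geq c] = \binom{n-c}{\alpha} / \binom{n}{\alpha}$. This follows from Observation~\ref{obs:Xi-characterization}: since every partition of $n-\alpha$ as an ordered sum of $\alpha+1$ non-negative integers is equally likely, the number of solutions with $X_k \geq c$ equals the number of solutions of $X_1 + \dots + X_{\alpha+1} = n - \alpha - c$ in non-negative integers, which is $\binom{n-c}{\alpha}$ by the standard stars-and-bars count, while the total number of solutions is $\binom{n}{\alpha}$. I would state this as a short preliminary computation (or as an immediate corollary of the proof of Lemma~\ref{lem:index-exchange}).

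Next, I would write each point probability as a first difference of tails:
\begin{equation*}
    \Pr[X_k = x] \;=\; \Pr[X_k \geq x] - \Pr[X_k \geq x+1] \;=\; \frac{\binom{n-x}{\alpha} - \binom{n-x-1}{\alpha}}{\binom{n}{\alpha}}.
\end{equation*}
Summing over $x = i, i+1, \dots, j$ gives a telescoping sum whose surviving terms are exactly $\binom{n-i}{\alpha}$ from the lower boundary and $-\binom{n-j-1}{\alpha}$ from the upper boundary, which is the claimed expression.

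There is no real obstacle here; the only minor point to be careful about is that the closed form $\Pr[X_k \geq c] = \binom{n-c}{\alpha}/\binom{n}{\alpha}$ is correct across the full range $0 \leq c \leq n - \alpha$ (and vanishes for larger $c$, matching $\binom{n-c}{\alpha} = 0$ by convention), so that both endpoints of the telescoping sum are valid for all admissible $i \leq j$. Once this is acknowledged, the proof is essentially a two-line computation.
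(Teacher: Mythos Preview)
Your proposal is correct and matches the paper's own proof essentially line for line: the paper simply records $\Pr[X_k\geq i] = \binom{n-i}{\alpha}/\binom{n}{\alpha}$ and $\Pr[X_k\geq j+1] = \binom{n-j-1}{\alpha}/\binom{n}{\alpha}$ and takes the difference, which is exactly your telescoping argument stated more tersely.
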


\begin{proof}
    We have $\Pr[X_k\geq i] = \binom{n-i}{\alpha} / \binom{n}{\alpha}$ and $\Pr[X_k\geq j+1] = \binom{n-(j+1)}{\alpha} / \binom{n}{\alpha}$.
\end{proof}

These basic facts allow us to compute the following expressions.

\begin{lemma} \label{lem:conditional-expected}
    We have $\sum_{x=1}^{m} \Pr[X_k=x] x  = \left( \binom{n}{\alpha+1} - \binom{n-m}{\alpha+1} - m\binom{n-1-m}{\alpha} \right) / \binom{n}{\alpha}$.
\end{lemma}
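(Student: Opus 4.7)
The plan is to rewrite the weighted sum via a standard tail-sum rearrangement, substitute the closed form for $\Pr[X_k \geq j]$, and then apply the hockey-stick identity to collapse the resulting sum of binomials.

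First I would write
\[
\sum_{x=1}^{m} x \Pr[X_k=x] = \sum_{x=1}^{m} \sum_{j=1}^{x} \Pr[X_k=x] = \sum_{j=1}^{m} \sum_{x=j}^{m} \Pr[X_k=x] = \sum_{j=1}^{m} \bigl(\Pr[X_k \ge j] - \Pr[X_k \ge m+1]\bigr),
\]
so that the target quantity equals $\sum_{j=1}^{m}\Pr[X_k \ge j] \;-\; m\,\Pr[X_k \ge m+1]$. The second term is immediate from Lemma~\ref{lem:range-nof-keys} (or directly from the identity $\Pr[X_k \ge i] = \binom{n-i}{\alpha}/\binom{n}{\alpha}$ recalled in its proof): it equals $m \binom{n-1-m}{\alpha}/\binom{n}{\alpha}$, which is exactly the last term of the claimed expression.

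For the first term, substituting the same identity gives $\sum_{j=1}^{m} \Pr[X_k \ge j] = \binom{n}{\alpha}^{-1} \sum_{j=1}^{m} \binom{n-j}{\alpha}$. Reindexing by $k = n-j$ turns the inner sum into $\sum_{k=n-m}^{n-1} \binom{k}{\alpha}$, to which I would apply the hockey-stick identity $\sum_{k=\alpha}^{N-1} \binom{k}{\alpha} = \binom{N}{\alpha+1}$. Taking the difference of two such telescoping sums yields
\[
\sum_{j=1}^{m} \binom{n-j}{\alpha} \;=\; \binom{n}{\alpha+1} - \binom{n-m}{\alpha+1},
\]
which provides the first two terms of the claimed numerator. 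Assembling both pieces over the common denominator $\binom{n}{\alpha}$ gives the stated equality.

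I do not expect any real obstacle: the only subtlety is making sure the boundary cases of the hockey-stick identity are handled correctly (in particular the case $m > n - \alpha$, where some of the binomials vanish by the convention $\binom{a}{b}=0$ for $a<b$, consistent with the fact that $\Pr[X_k \ge j]=0$ once $j > n-\alpha$). Once these conventions are fixed, the derivation is a direct two-line calculation.
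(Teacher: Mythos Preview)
Your proof is correct and follows essentially the same approach as the paper: both use the tail-sum rearrangement to arrive at $\sum_{j=1}^{m}\binom{n-j}{\alpha}/\binom{n}{\alpha} - m\binom{n-1-m}{\alpha}/\binom{n}{\alpha}$, and both then collapse the remaining sum to $\binom{n}{\alpha+1}-\binom{n-m}{\alpha+1}$. The only cosmetic difference is that the paper derives this last identity by a second application of Lemma~\ref{lem:range-nof-keys} (reinterpreting the summands as probabilities in an $(n{+}1,\alpha{+}1)$ instance), whereas you invoke the hockey-stick identity directly.
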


\begin{proof}
By elementary computation, we have
\begin{align}
\sum_{x=1}^{m} \Pr[X_k=x] x 
&= \sum_{x=1}^{m} \Pr[X_k=x] + \sum_{x=2}^{m} \Pr[X_k=x] + \dots  + \sum_{x=m}^{m} \Pr[X_k=x]  \nonumber \\
&= \frac{\binom{n-1}{\alpha} - \binom{n-1-m}{\alpha}}{\binom{n}{\alpha}} + \ldots + \frac{\binom{n-m}{\alpha} - \binom{n-1-m}{\alpha}}{\binom{n}{\alpha}}
\nonumber \\
&= \sum_{i=1}^{m}\frac{\binom{n-i}{\alpha}}{\binom{n}{\alpha}} - m\frac{\binom{n-1-m}{\alpha}}{\binom{n}{\alpha}}~. \label{eq:cexpec-a}
\end{align}
Note that an RBST on $n+1$ keys with block size $\alpha+1$ has $\Pr[X_1=i] = \frac{\binom{n-i}{\alpha}}{\binom{n+1}{\alpha+1}}$. 
Thus, we have from Lemma~\ref{lem:range-nof-keys} that 
\begin{align}
    \sum_{i=1}^{m}\Pr[X_1=i] = \sum_{i=1}^{m} \frac{\binom{n-i}{\alpha}}{\binom{n+1}{\alpha+1}} = \frac{\binom{(n+1)-1}{\alpha+1} - \binom{(n+1)-1-m}{\alpha+1}}{\binom{n+1}{\alpha+1}}\\
    \Rightarrow
    \sum_{i=1}^{m} \binom{n-i}{\alpha} = \binom{n}{\alpha+1} - \binom{n-m}{\alpha+1}
    \quad \label{eq:cexpec-b}
\end{align}
and the lemma follows by using (\ref{eq:cexpec-b}) for the summation in (\ref{eq:cexpec-a}).
\end{proof}

We are now ready to prove our size bound.
\begin{theorem}[Size] \label{lem:size}
The expected number of non-full blocks in an $n$ key $(\alpha,\eps)$-RBST, $\E[E_n]$, is at most $\max \{ \eps n/\alpha,~1\} $.
\end{theorem}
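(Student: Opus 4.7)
First, I would reduce the problem to counting terminal subtrees. A non-full block can only appear as the last block of a maximal list-structured subtree, i.e., a subtree whose root has fan-out $\delta=1$ (equivalently, whose weight is at most $\alpha+\rho$), and such a list contributes one non-full block precisely when its weight is not divisible by $\alpha$. Hence $E_n\le L_n$, where $L_n$ counts the maximal list subtrees, and it suffices to bound $\E[L_n]$. The base case $n\le\alpha+\rho$ gives $L_n\le 1\le \max\{\eps n/\alpha,1\}$ for free.

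Next, I would induct on $n$ through the recursion $\E[L_n]=\delta(n)\,\E[L_{X_1}]$ for $n>\alpha+\rho$. This recursion holds because the root of a branching subtree is full, and by Observation~\ref{obs:Xi-characterization} and Lemma~\ref{lem:index-exchange} the section sizes $(X_1,\ldots,X_\delta)$ are exchangeable and uniform over non-negative integer tuples summing to $n-\alpha$. After calibrating $\rho=\Theta(\alpha/\eps)$ (so that $\alpha+\rho$ aligns with the crossover point of the $\max$), I would apply the inductive hypothesis and split the expectation $\E[L_{X_1}]$ by whether $X_1$ is zero, terminal ($1\le X_1\le\alpha+\rho$), or branching. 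Lemma~\ref{lem:range-nof-keys} and Lemma~\ref{lem:conditional-expected} then give closed forms for the resulting partial hypergeometric sums, and the entire calculation should reduce to an inequality roughly of the form
\[
\delta\cdot\E\big[(\rho-X_1)\,\mathbf{1}_{1\le X_1<\rho}\big]\;\le\;\alpha,
\]
which states that on average the ``deficit'' contributed by short non-empty children is offset by the $\alpha$ keys stored in the parent block.

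The main obstacle is establishing this inequality uniformly in $n$ for every admissible $\delta\in\{2,\ldots,\alpha+1\}$: a naive application of the inductive bound $\E[E_m]\le \max\{1,\eps m/\alpha\}$ loses a factor of $\delta$ through the additive $1$, which is tight only for very small terminals but crude across the bulk of the distribution. I expect to recover the needed slack by exploiting two effects visible through Lemma~\ref{lem:conditional-expected}: the cancellation between children with $X_i<\rho$ (positive deficit) and with $X_i>\rho$ (surplus paid through the mean constraint $\delta\,\E[X_1]=n-\alpha$), and the fact that terminal children whose weight is an exact multiple of $\alpha$ contribute $E_{X_i}=0$ rather than $1$. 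Once this inequality is verified, the induction closes and the bound $\E[E_n]\le \max\{\eps n/\alpha,1\}$ follows, with $\rho$ chosen so that $\alpha+\rho$ slightly exceeds $\alpha/\eps$.
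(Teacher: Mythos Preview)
Your reduction $E_n\le L_n$ to counting maximal list-subtrees is valid, as is the exchangeability recursion $\E[L_n]=\delta(n)\,\E[L_{X_1}]$. The gap is that the inequality you isolate is false under the calibration you propose, and neither of your two suggested slack sources repairs it. Take $\delta=\alpha+1$ and $n$ just above $\gamma=(\alpha+1)\rho+\alpha$; then $X_1$ has mean $\approx\rho$ and is, for large $\alpha$, approximately $\rho\cdot\mathrm{Exp}(1)$. With your alignment $\rho\approx\alpha/\eps$ one gets $\E\bigl[(\rho-X_1)\mathbf 1_{1\le X_1<\rho}\bigr]\approx\rho\int_0^1(1-y)e^{-y}\,dy=\rho/e$, so your left-hand side is $(\alpha+1)\rho/e=\Theta(\alpha^2/\eps)$, not $\le\alpha$. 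Extending the indicator to the correct range $1\le X_1\le \rho+\alpha$ adds only $O(\alpha)$ values of $x$ and does not change the order. The mean identity $\delta\,\E[X_1]=n-\alpha$ cannot supply further surplus either: it was already consumed when you absorbed the branching contribution $\sum_{x>t}\Pr[X_1{=}x]\,\eps x/\alpha$ into $\eps(n-\alpha)/\alpha$ on the way to the reduced inequality. And the event ``$X_i$ divisible by $\alpha$'' has probability $O(1/\alpha)$, saving only $O(\delta/\alpha)=O(1)$ from the left side, nowhere near the $\Theta(\alpha^2/\eps)$ shortfall.

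The paper avoids this by \emph{not} running a single induction across all fan-outs. It first establishes, via a separate induction on $\delta_n$ with hand-verified base cases $\delta_n\in\{1,2,3\}$, that every buffer satisfies $\E[E_n]\le 27\delta_n$. Choosing $\rho=108\alpha/\eps$ then converts this to the \emph{halved} bound $\E[E_n]\le \eps(n-\alpha)/(2\alpha)$ throughout the intermediate range $(\rho+\alpha,\gamma]$. The outer induction is applied only for $n>\gamma$, where $\delta=\alpha+1$ always, and the inequality to be checked becomes $(\alpha+1)\sum_{x=1}^{\gamma}\Pr[X_1{=}x]\bigl(1-\eps x/(2\alpha)\bigr)\le 0$; here the upper limit $\gamma\approx 108(\alpha{+}1)\alpha/\eps\gg 2\alpha/\eps$ makes the negative tail dominate, and the paper verifies this as a binomial-coefficient inequality. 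The idea your plan is missing is precisely this two-tier structure: a dedicated buffer analysis yielding a bound that beats the target by a constant factor, which then feeds the main induction with enough slack to close.
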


Our proof is by induction on $n$, where the base cases are due to bounds on the number of non-full blocks that are occupied by the secondary buffer structures (see Appendix~\ref{sec:size-buffer}).
\begin{proof}
    Observation \ref{obs:buffer-delta-one-nonfull-blocks} states that $E_n\leq 1$ for all $n$ with $\delta_n=1$.
    Moreover, Theorem \ref{thm:buffer-size-upper-bound} shows that $\E[E_n]\leq 27 \delta_n$ for $n \leq \beta+\alpha = (\alpha+1) \rho+\alpha$.
    For simplicity, we will use $t$ and $\gamma$ instead of $\rho+\alpha$ and $(\alpha+1)\rho+\alpha$ respectively.
    We can conclude from Theorem \ref{thm:buffer-size-upper-bound} that for $t < n \leq \gamma$,
    \begin{align*}
        \E[E_n]     &\leq 27 \left\lceil \frac{n-\alpha}{\rho} \right\rceil
                        = 27 \left\lceil \frac{n-\alpha}{108\alpha/\eps} \right\rceil\\
                    &\leq 27 \left(\frac{n-\alpha}{108\alpha/\eps}+1 \right)
        \leq 27\cdot2\frac{n-\alpha}{108\alpha/\eps} = \frac{n-\alpha}{2\alpha/\eps}~.
    \end{align*}
    The last inequality holds because $n>\rho+\alpha=108\alpha/\eps+\alpha$ and $\frac{n-\alpha}{108\alpha/\eps}>1$.\\
    
    For $n > \gamma$, we prove the theorem by induction. 
    For each index $i \in \{1,\ldots, \alpha+1\}$, define $Y_{i}$ as the number of non-full blocks in the $i$-th section beneath the root (of an RBST with $n$ keys).
    Note that $Y_{i}$ only depends on $X_i$, i.e. the number of keys in the $i$-th section, and their relative priorities.
    We thus have 
    
    \begin{align} 
        \E[Y_i] &= \sum_{x_i=1}^{n-\alpha} \Pr[X_i=x_i] \cdot \E [E_{x_i}] 
        \nonumber\\
        &\leq \sum_{x_i=1}^{t} \Pr[X_i=x_i] \cdot 1 + \sum_{x_i=t+1}^{\gamma} \Pr[X_i=x_i]\frac{x_i}{2\alpha/\eps} + \sum_{x_i=\gamma+1}^{n-\alpha} \Pr[X_i=x_i] \frac{x_i}{\alpha/\eps}\\
        &= \left( \sum_{x_i=1}^{t} \Pr[X_i=x_i] 
        + \sum_{x_i=t+1}^{\gamma} \Pr[X_i=x_i] \right) - \sum_{x_i=t+1}^{\gamma} \Pr[X_i=x_i] \nonumber\\
        &+ \left(\sum_{x_i=t+1}^{\gamma} \Pr[X_i=x_i]\frac{x_i}{2\alpha/\eps} 
        - \sum_{x_i=t+1}^{\gamma} \Pr[X_i=x_i] \frac{x_i}{\alpha/\eps}\right)
        -2 \sum_{x_i=1}^{t} \Pr[X_i=x_i] \frac{x_i}{2\alpha/\eps} \nonumber\\
        &+ \left(\sum_{x_i=1}^{t} \Pr[X_i=x_i] \frac{x_i}{\alpha/\eps}
        + \sum_{x_i=t+1}^{\gamma} \Pr[X_i=x_i] \frac{x_i}{\alpha/\eps}
        + \sum_{x_i=\gamma+1}^{n-\alpha} \Pr[X_i=x_i] \frac{x_i}{\alpha/\eps} \right)\\
        &= \left(\sum_{x_i=1}^{\gamma} \Pr[X_i=x_i] 
        -\sum_{x_i=t+1}^{\gamma} \Pr[X_i=x_i]\frac{x_i}{2\alpha/\eps} 
        - \sum_{x_i=1}^{t} \Pr[X_i=x_i] \frac{x_i}{2\alpha/\eps} \right) \nonumber\\
        &+ \sum_{x_i=1}^{n-\alpha} \Pr[X_i=x_i] \frac{x_i}{\alpha/\eps}
        - \sum_{x_i=1}^{t} \Pr[X_i=x_i] \frac{x_i}{2\alpha/\eps}
        - \sum_{x_i=t+1}^{\gamma} \Pr[X_i=x_i]
        \\
        &\leq \sum_{x_i=1}^{\gamma} \Pr[X_i=x_i] \left(1-\frac{x_i}{2\alpha/\eps} \right) + \frac{\E[X_i]}{\alpha/\eps}
        \end{align}
        
        Using Lemma~\ref{lem:index-exchange}, we have that the summation term has equal value for each section $i \in \{1,\ldots, \alpha+1\}$. 
        Since, for $n \ge \alpha$, the root is full and not counted in $E_{n}$, we have
        \begin{align} 
         \E[E_n]  = \E[Y_1+ ... + Y_{\alpha+1} ]
          &\leq (1 +\alpha) \sum_{x_1=1}^{\gamma} \Pr[X_1=x_1] \left( 1-\frac{x_1}{2\alpha/\eps} \right) + \frac{n-\alpha}{\alpha/\eps}~. \label{eq:rhs-induction}
        \end{align}

    To proof the inequality it suffices to show that the right-hand side of (\ref{eq:rhs-induction}) is at most $\frac{\eps n}{\alpha}$.
    This is true if and only if 
    $ (1 + \alpha) \sum_{x_1=1}^{\gamma} \Pr[X_1=x_1] \left( 1-\eps\frac{x_1}{2\alpha} \right) \leq \eps $.
    Since $\eps >0$ and $\alpha+1 > 0$, it suffices to show that the value of the sum is not positive. This is if and only if
    \begin{align}
        \sum_{x_1=1}^{\gamma} \Pr[X_1=x_1]
        \le \frac{\eps}{2\alpha}\sum_{x_1=1}^{\gamma} \Pr[X_1=x_1]x_1~.
    \end{align}
    Use Lemma~\ref{lem:range-nof-keys} on the left-hand side and Lemma~\ref{lem:conditional-expected} on the right-hand side, it remains to show that
    \begin{align}
    \frac{2\alpha}{\eps}\binom{n-1}{\alpha} - \frac{2\alpha}{\eps}\binom{n-1-\gamma}{\alpha}
    &\le \binom{n}{\alpha+1} - \binom{n-\gamma}{\alpha+1} - \gamma\binom{n-1-\gamma}{\alpha}~.
    \end{align}
    By elementary computation, this inequality holds for all $n>\gamma$.
    (See Appendix~\ref{sec:final-claim} Lemma~\ref{lem:final-claim} for a full proof of this claim.)
    \end{proof}

Since each full block contains $\alpha$ distinct keys, i.e. $F_n\le n/\alpha$, we showed that $S_n\le (1+\eps)n/\alpha$.
Thus, we obtain for the load-factor $L$, i.e. the relative utilization of keys in allocated blocks, the following lower bound.

\begin{corollary}[Load-Factor]\label{lem:total-size}
Let $\eps \in \left(0,1/2\right]$ and $n\geq \alpha +\beta$.
The expected number of blocks occupied by an {$(\alpha,\eps)$-RBST} on $n$ keys is at most $(1+\eps)n/\alpha$, i.e. the expected load-factor is at least $1-\eps$.
\end{corollary}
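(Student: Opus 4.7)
The plan is to decompose the total block count as $S_n = F_n + E_n$ and bound each summand separately. Since every full block contains exactly $\alpha$ distinct keys and the $n$ keys of the RBST are partitioned across all blocks, we deterministically have $F_n \leq n/\alpha$, so $\E[F_n] \leq n/\alpha$. For the non-full blocks, Theorem~\ref{lem:size} gives $\E[E_n] \leq \max\{\eps n/\alpha,\,1\}$, and linearity of expectation then yields $\E[S_n] \leq n/\alpha + \max\{\eps n/\alpha,\, 1\}$.

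Next I would verify that, under the hypothesis $n \geq \alpha + \beta$, the first term in the maximum dominates, so that the bound collapses to $\E[S_n] \leq (1+\eps)n/\alpha$. Recall $\beta = (\alpha+1)\rho + \alpha$ with $\rho = \Theta(\alpha/\eps)$, hence $\beta = \Omega(\alpha^2/\eps)$. In particular $n \geq \alpha + \beta \geq \alpha/\eps$, so $\eps n/\alpha \geq 1$ and $\max\{\eps n /\alpha, 1\} = \eps n/\alpha$. This gives the stated block-count bound.

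Finally, to convert a block-count upper bound into a load-factor lower bound, I would note that the load-factor is $n/(\alpha\, S_n)$, so its expectation is at least $n/(\alpha\,\E[S_n]) \geq 1/(1+\eps)$ by the bound just shown. The elementary inequality $(1-\eps)(1+\eps) = 1 - \eps^2 \leq 1$ for $\eps \in (0, 1/2]$ implies $1/(1+\eps) \geq 1-\eps$, completing the proof.

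There is no real obstacle here: the corollary is essentially a short accounting step on top of Theorem~\ref{lem:size}. The only subtlety is confirming that the hypothesis $n \geq \alpha + \beta$ (together with the chosen calibration $\rho = \Theta(\alpha/\eps)$ from Section~\ref{sec:size}) suffices to eliminate the $\max\{\cdot,1\}$ in the theorem's bound, and then observing the elementary $1/(1+\eps) \geq 1-\eps$ to restate the bound as a clean load-factor guarantee.
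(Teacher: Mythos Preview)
Your proof is correct and follows essentially the same route as the paper: bound $F_n\le n/\alpha$ deterministically, invoke Theorem~\ref{lem:size} for $\E[E_n]$, then pass to the load-factor via $\E[1/S_n]\ge 1/\E[S_n]$ and the elementary $1/(1+\eps)\ge 1-\eps$. Two minor remarks: (i) you write $\beta=(\alpha+1)\rho+\alpha$, but in fact $\alpha+\beta=(\alpha+1)\rho+\alpha$, i.e.\ $\beta=(\alpha+1)\rho$; this does not affect your verification that $\eps n/\alpha\ge 1$; (ii) the step ``its expectation is at least $n/(\alpha\,\E[S_n])$'' is exactly Jensen's inequality for the convex map $x\mapsto 1/x$, which the paper names explicitly---you may want to do the same.
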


\begin{proof}
The load-factor is the random variable
$L = n/\alpha S_n$. 
Since $\phi(x)=1/x$ is a convex function, Jensen's inequality gives 
$1/\E[X]\le \E[1/X]$. 
Using Lemma~\ref{lem:size}, we have
$    
\E[L] \geq \frac{n}{\alpha}\frac{1}{(1+\eps)n/\alpha} = \frac{1}{1+\eps} \geq 1 - \eps$.
\end{proof}

Combining the results of Lemma~\ref{lem:point-depth} and Theorem~\ref{lem:size}, we showed the following bound for reporting all results of a range-search in RBSTs.
\begin{corollary}[Range-Search]
    Let $\eps \in (0,1/2]$. 
    The expected number of block reads to report all results of a range search in $(\alpha,\eps)$-RBSTs is at most $\O(\eps^{-1} + k/\alpha + \log_\alpha n)$, where $k$ is the number of result keys.
\end{corollary}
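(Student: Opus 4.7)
I would decompose the blocks read during the range search into three groups: (i) blocks visited on the search path for the left boundary $q$, (ii) blocks visited on the search path for the right boundary $q'$, and (iii) blocks belonging to a \emph{maximally-inside subtree}, i.e.\ a subtree hanging off a node of one of the two search paths whose entire key range is contained in $[q,q']$. For parts~(i) and~(ii), applying Lemma~\ref{lem:point-depth} to each of the two point searches immediately bounds the expected block-reads by $\O(\eps^{-1} + \log_\alpha n)$.

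For part~(iii), the crucial observation is that each maximally-inside subtree $T_i$ is itself an $(\alpha,\eps)$-RBST on its own $n_i$ keys: this is because the restriction of a uniform random permutation in $\Perm(n)$ to any fixed subset of cardinality $n_i$ is uniform in $\Perm(n_i)$, so the subtree shape only depends on the relative priority order of its keys. Let $T_1,\dots,T_M$ denote these pairwise key-disjoint subtrees; since all keys stored therein are reported, $\sum_i n_i \le k$. Applying Theorem~\ref{lem:size} to each $T_i$ together with the deterministic bound $F_{n_i} \le n_i/\alpha$ on the full-block count yields
\[
\E[S_{n_i}] \;\le\; n_i/\alpha + \max\{\eps\,n_i/\alpha,\,1\} \;\le\; (1+\eps)\,n_i/\alpha + 1,
\]
so summing over $i$ gives expected group-(iii) block count at most $(1+\eps)\,k/\alpha + M$, which contributes the desired $\O(k/\alpha)$ main term.

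The main obstacle is absorbing the additive $+M$ term into the final bound, since a maximally-inside subtree can have as few as one key and thereby contribute a full block that is not amortized by the $k/\alpha$ term. Each $T_i$ is rooted at a distinct child-pointer emanating from some node on one of the two search paths. I would adapt the layer-partition counting argument from the proof of Lemma~\ref{lem:point-depth} to bound emanating child-pointers instead of path nodes: replacing the divisor $\alpha$ by the fan-out $\delta$ of the parent in the per-layer tally, and treating the $\O(\eps^{-1})$ list suffix of buffering nodes on each path separately, should yield $\E[M] = \O(\eps^{-1} + \log_\alpha n)$. Summing the expected contributions of groups~(i)--(iii) then gives the claimed $\O(\eps^{-1} + k/\alpha + \log_\alpha n)$ bound.
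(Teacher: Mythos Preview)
Your decomposition into the two boundary search paths (handled by Lemma~\ref{lem:point-depth}) and the maximally-inside subtrees (handled by Theorem~\ref{lem:size}) is exactly the paper's approach: the corollary is stated as following directly from these two results, with no further detail given. You actually go beyond the paper by explicitly isolating the additive $+M$ term and recognising that absorbing it is the only non-obvious step.

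However, the fix you sketch does not deliver the bound you claim. In the proof of Lemma~\ref{lem:point-depth} one shows $V_i \le Y_i/\alpha$ with $\E[Y_i]=\O(\alpha)$, hence $\E[V_i]=\O(1)$ primary path-nodes per layer. Counting emanating child pointers instead, each primary node contributes up to $\alpha$ of them, so the per-layer tally becomes $\alpha\cdot V_i \le Y_i$, whose expectation is $\O(\alpha)$; summing over the $\O(\log_\alpha n)$ layers yields $\E[M]=\O(\alpha\log_\alpha n)$, an $\alpha$-factor off from the $\O(\log_\alpha n)$ you assert. ``Replacing the divisor $\alpha$ by the fan-out $\delta$'' changes nothing for primary nodes, where $\delta=\alpha+1$. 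A partial remedy is to note that every inside subtree with $n_i\ge\alpha$ has a \emph{full} root block contributing $\alpha$ distinct output keys, so there are at most $k/\alpha$ such subtrees and their $+1$ overhead folds into $\O(k/\alpha)$; what then remains is to bound only the number of inside subtrees with $n_i<\alpha$ (single non-full blocks), a strictly smaller quantity than $M$ but one that still requires a probabilistic argument beyond what either you or the paper spell out.
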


We are now ready to show our bound of the write-efficiency of $(\alpha,\eps)$-RBSTs.

\section{Bounds for Dynamic Updates based on Partial-Rebuilding}
\label{sec:efficient-updates}
The weight-proportional fan-out in the design of our secondary UR tree structure has the advantage, in comparison to the use of a simple list of $O(\beta/\alpha)$ blocks, that it avoids the need of additional restructuring work whenever a subtree must change its state between buffering and non-buffering, due to updates (cf. Section~\ref{sec:algo-updates}).
Together with the space bound from last section and the observation that the partial-rebuild algorithm for updating RBSTs only rebuilds at most $3$ subtrees beneath the affected node, regardless of its fan-out, allows us to show our bound on the expected structural change in this section.

\begin{theorem}\label{lem:update-bound}
    Let $\eps \in (0,1/2]$. 
    The expected total structural change of an update in an $n$ key $(\alpha,\eps)$-RBST is $\O\left( \frac{1}{\eps} + \frac{\log_\alpha n}{\alpha}\right)$, i.e. $\O(1/\eps)$ for $\alpha = \Omega(\log(n)/\log \log (n))$.
\end{theorem}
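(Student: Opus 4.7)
The plan is to bound $\E[m+m']$ (Observation~\ref{obs:algo}) by combining a layer-partition bound on the expected weight of the rebuilt subtree, analogous to Lemma~\ref{lem:point-depth}, with the size bound of Theorem~\ref{lem:size}. By symmetry between insertion and deletion, I would focus on an insertion of key $x$. The first step is to identify the topmost node $v$ on the search path of $x$ whose block is affected, i.e.\ where case~(i) or case~(ii) of the partial-rebuild algorithm from Section~\ref{sec:algo-updates} triggers. As shown there, the algorithm rebuilds at most three subtrees beneath $v$ (and at most three beneath the corresponding $v'$), so
\[
    m + m' \;\le\; \O(1) + \sum_{j=1}^{3}\bigl(S(T_j)+S(T'_j)\bigr),
\]
where the $T_j,T'_j$ are the affected child subtrees; using the weight-free variant noted in Section~\ref{sec:algo-updates} avoids counting search-path updates above $v$. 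Hence it suffices to bound the expected total block count of these $\O(1)$ subtrees.

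Next I would bound the expected weight $n_v$ of $v$'s subtree. I would use the layer partition of Lemma~\ref{lem:point-depth}, splitting priorities into intervals $[\alpha^i,\alpha^{i+1})$ and conditioning on the layer of $\pi(x)$. The same forward/backward counting as in that proof, but now applied to count the keys of $v$'s subtree rather than its ancestors, produces the sum $\sum_i \frac{\alpha^{i+1}-\alpha^i}{n}\cdot\frac{n-\alpha^i+1}{\alpha^i-1}=\O(\alpha\log_\alpha n)$ foreshadowed at the end of Section~\ref{sec:search-and-depth}. By the exchange symmetry of Lemma~\ref{lem:index-exchange}, each of $v$'s $\delta_v$ sections has equal expected weight $(n_v-\alpha)/\delta_v$, so the at most three rebuilt child subtrees together carry expected weight $\O(\E[n_v]/\alpha)=\O(\log_\alpha n)$ when $v$ is non-buffering, and $\O(\rho)=\O(\alpha/\eps)$ when $v$ is buffering.

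The final step converts subtree weight to block count. Using $F_w\le w/\alpha$ for full blocks together with Theorem~\ref{lem:size}, which gives $\E[E_w\mid w]\le\max\{\eps w/\alpha,1\}$ for non-full blocks, the law of total expectation yields $\sum_j\E[S(T_j)+S(T'_j)]=\O(\log_\alpha n/\alpha + 1/\eps)$, matching the claim. The special case $\alpha=\Omega(\log n/\log\log n)$ then forces $\log_\alpha n/\alpha=\O(1/\alpha)\subseteq\O(1/\eps)$.

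The hard part will be the middle step: adapting the layer partition of Lemma~\ref{lem:point-depth}, which counted ancestors, into a bound on the subtree weight at the rebuild node $v$. This requires handling cases~(i) and~(ii) of the rebuild trigger uniformly, and dealing carefully with the boundary layer where the denominator $\alpha^i-1$ is small. The first-step reduction and the final-step conversion from weight to blocks are then essentially bookkeeping on top of the lemmas already established in Sections~\ref{sec:search-and-depth} and~\ref{sec:size}.
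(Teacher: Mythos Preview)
Your plan follows essentially the same route as the paper's proof: condition on the layer of $\pi(x)$, bound the expected subtree weight at $v$ via the section-counting sum foreshadowed at the end of Section~\ref{sec:search-and-depth}, pass to the (at most three) affected child subtrees, and convert weight to blocks via Theorem~\ref{lem:size}, with the buffering case handled separately to yield the $O(1/\eps)$ term. One caveat to tighten when you write it out: Lemma~\ref{lem:index-exchange} gives $(n_v-\alpha)/\delta_v$ only for a \emph{fixed} section index, whereas the three rebuilt subtrees are those incident to $x$ and to the displaced key $y$ and hence depend on $x$'s key-value position; the $O(n_v/\delta_v)$ bound you need still holds via a short first-success (negative-hypergeometric) count on the $\alpha-1$ other block keys, and the paper's own proof invokes Observation~\ref{obs:Xi-characterization} with comparable brevity at exactly this step.
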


\begin{proof}
It suffices to bound the expectation of the total structural change for deleting a key $x$ from the RBST, since the case of insertion of the key would count the same number of blocks.
Let $X'=X\cup\{x\}$ be the set of keys before and after the deletion of $x$ respectively, where $|X'|=n$.
For the subtree root $v$ that is subject to the partial rebuild algorithm from Section~\ref{sec:algo-updates}, there are three cases, i.e. $\delta_v \leq 1$, $\delta_v \in [2,\alpha]$, and $\delta_v=\alpha+1$.

For $\delta_v\leq 1$, we have in the worst case at most $\O(1/\eps)$ blocks in the buffer's list that are modified.

For $\delta_v=\alpha+1$, $v$ is a node of the primary tree and we consider the the layer partition of the priority values $\{1,\ldots,n\}$ in intervals of the form $[\alpha^i, \alpha^{i+1})$ for integer $i<\log_\alpha n$.
The probability that the priority of key $x$ falls in the interval of layer $i$ is $\Pr[\pi(x)\in [\alpha^i,\alpha^{i+1})]=\frac{\alpha^{i+1}-\alpha^i}{n} \leq \alpha^{i+1}/n$.
For layer $i$, let $K=\{x' \in X~:~\pi(x')<\alpha^i\}$ be the set of separators that partition the keys with priorities $\geq \alpha^i$, let random variable 
$Y(\pi)$ be the number of keys of $X\setminus K$ that are in the same section as $x$.
Thus, for all permutations $\pi$, the weight of the subtree of $v$ is at most $|X(v)| \leq Y$.
Using the characterization of Observation~\ref{obs:Xi-characterization}, we also have that total expected weight of three, from the $\alpha+1$, subtrees of $v$ is at most $\frac{3}{\alpha+1}Y$.
Clearly, $\E[Y]=\sum_{n'}n'\Pr[Y = n']$ by definition and we have $\E[Y]=\O(\frac{n-\alpha^i}{\alpha^i})=\O(n/\alpha^i)$ from direct calculation.
Note that the key count $n'$ per section is regardless of their relative order, i.e. each of the $(n')!$ orders is equally likely.
Thus, Theorem~\ref{lem:size} implies that the expected number of blocks in an RBST on $\frac{3}{\alpha+1}n'$ keys is bounded within a $\frac{1+\eps}{\alpha}$ factor. 
Thus, for $x$ in layer $i$, the expected number of blocks $m'$ in the three subtrees of $v$ is at most
\begin{align}
  \E[ ~m'~|~\pi(x) \in [\alpha^i,\alpha^{i+1})~] 
&=
    \sum_{n'} \E[ ~m'~|~\pi(x) \in [\alpha^i,\alpha^{i+1})~,~Y=n'~]\Pr[Y=n'] \\
&\leq   \sum_{n'} \frac{1+\eps}{\alpha} \left(\frac{3}{\alpha+1}n'\right) \Pr[Y=n'] \label{eq:bound-update}\\
&=      3\frac{1+\eps}{\alpha(\alpha+1)}\E[Y] =\O(n/\alpha^{i+2})~.
\end{align}
Consequently, the expected number of block writes is at most
\begin{align*}
    \E[m'] 
&= 
\sum_{i=0}^{\lfloor \log_\alpha n \rfloor} \E\left[~m'~|~\pi(x)\in [\alpha^i,\alpha^{i+1})\right] \Pr\left[\pi(x)\in [\alpha^i,\alpha^{i+1}) \right] 
\\
&\leq 
\sum_i \O\left(\frac{n}{\alpha^{i+2}}\right)\frac{\alpha^{i+1}}{n} 
= \O\left(\frac{\log_\alpha n}{\alpha}\right)\quad.
\end{align*}

For $\delta_v \in [2,\alpha]$, $v$ is a node of the secondary tree and we have its subtree weight $n'$ that $n' =\Theta(\delta_v \alpha / \eps)$.
Thus, Eq.~(\ref{eq:bound-update}) is $\sum_{n'} \frac{1+\eps}{\alpha}\frac{3}{\delta_v} \Theta(\delta_v \alpha/\eps) \Pr[Y=n'] = \O(1/\eps)$ for this case.
\end{proof}

\bibliography{../refs.bib}

\begin{thebibliography}{10}

\bibitem{AnderssonO95}
Arne Andersson and Thomas Ottmann.
\newblock {New Tight Bounds on Uniquely Represented Dictionaries}.
\newblock {\em {SIAM} J. Comput.}, 24(5):1091–1101, 1995.
\newblock \href {https://doi.org/10.1137/S0097539792241102}
  {\path{doi:10.1137/S0097539792241102}}.

\bibitem{Baeza-Yates89a}
Ricardo~A. Baeza{-}Yates.
\newblock The expected behaviour of b+-trees.
\newblock {\em Acta Informatica}, 26(5):439–471, 1989.
\newblock \href {https://doi.org/10.1007/BF00289146}
  {\path{doi:10.1007/BF00289146}}.

\bibitem{BayerM72}
Rudolf Bayer and Edward~M. McCreight.
\newblock Organization and maintenance of large ordered indices.
\newblock {\em Acta Informatica}, 1:173–189, 1972.
\newblock \href {https://doi.org/10.1007/BF00288683}
  {\path{doi:10.1007/BF00288683}}.

\bibitem{BenderFFFKN07}
Michael~A. Bender, Martin Farach-Colton, Jeremy~T. Fineman, Yonatan~R. Fogel,
  Bradley~C. Kuszmaul, and Jelani Nelson.
\newblock Cache-oblivious streaming b-trees.
\newblock In Phillip~B. Gibbons and Christian Scheideler, editors, {\em SPAA},
  page 81–92. ACM, 2007.
\newblock URL: \url{https://doi.org/10.1145/1248377.1248393}.

\bibitem{BlellochG07}
Guy~E. Blelloch and Daniel Golovin.
\newblock {Strongly History-Independent Hashing with Applications}.
\newblock In {\em Proc. of the 48th Symposium on Foundations of Computer
  Science (FOCS'07)}, page 272–282, 2007.
\newblock \href {https://doi.org/10.1109/FOCS.2007.36}
  {\path{doi:10.1109/FOCS.2007.36}}.

\bibitem{BlellochGV08}
Guy~E. Blelloch, Daniel Golovin, and Virginia Vassilevska.
\newblock {Uniquely Represented Data Structures for Computational Geometry}.
\newblock In Joachim Gudmundsson, editor, {\em {Algorithm Theory - {SWAT} 2008,
  11th Scandinavian Workshop on Algorithm Theory, Gothenburg, Sweden, July 2-4,
  2008, Proceedings}}, volume 5124 of {\em {Lecture Notes in Computer
  Science}}, page 17–28. Springer, 2008.
\newblock \href {https://doi.org/10.1007/978-3-540-69903-3\_4}
  {\path{doi:10.1007/978-3-540-69903-3\_4}}.

\bibitem{BrodalF03}
Gerth~Stølting Brodal and Rolf Fagerberg.
\newblock Lower bounds for external memory dictionaries.
\newblock In {\em Proc. of the 14th Annual {ACM-SIAM} Symposium on Discrete
  Algorithms (SODA'03)}, page 546–554, 2003.
\newblock URL: \url{https://dl.acm.org/doi/10.5555/644108.644201}.

\bibitem{dutch-book}
Mark de~Berg, Otfried Cheong, Marc~J. van Kreveld, and Mark~H. Overmars.
\newblock {\em Computational Geometry: Algorithms and Applications, 3rd
  Edition}.
\newblock Springer, 2008.
\newblock \href {https://doi.org/10.1007/978-3-540-77974-2}
  {\path{doi:10.1007/978-3-540-77974-2}}.

\bibitem{ElmasryKAH19}
Amr Elmasry, Mostafa Kahla, Fady Ahdy, and Mahmoud Hashem.
\newblock Red-black trees with constant update time.
\newblock {\em Acta Informatica}, 56(5):391–404, 2019.
\newblock \href {https://doi.org/10.1007/s00236-019-00335-9}
  {\path{doi:10.1007/s00236-019-00335-9}}.

\bibitem{EsmetBFK12}
John Esmet, Michael~A. Bender, Martin Farach-Colton, and Bradley~C. Kuszmaul.
\newblock The tokufs streaming file system.
\newblock In {\em HotStorage}. USENIX Association, 2012.
\newblock URL:
  \url{https://www.usenix.org/conference/hotstorage12/workshop-program/presentation/esmet}.

\bibitem{GolovinPhd}
Daniel Golovin.
\newblock {\em {Uniquely represented data structures with applications to
  privacy}}.
\newblock PhD thesis, Carnegie Mellon University, 2008.

\bibitem{Golovin09}
Daniel Golovin.
\newblock {B-Treaps: {A} Uniquely Represented Alternative to B-Trees}.
\newblock In {\em Proc. of the 36th International Colloquium on Automata,
  Languages, and Programming (ICALP'09)}, page 487–499, 2009.
\newblock \href {https://doi.org/10.1007/978-3-642-02927-1\_41}
  {\path{doi:10.1007/978-3-642-02927-1\_41}}.

\bibitem{JannenYZAEJMPRW15}
William Jannen, Jun Yuan, Yang Zhan, Amogh Akshintala, John Esmet, Yizheng
  Jiao, Ankur Mittal, Prashant Pandey, Phaneendra Reddy, Leif Walsh, Michael~A.
  Bender, Martin Farach-Colton, Rob Johnson, Bradley~C. Kuszmaul, and Donald~E.
  Porter.
\newblock Betrfs: A right-optimized write-optimized file system.
\newblock In {\em FAST}, page 301–315. USENIX Association, 2015.
\newblock URL:
  \url{https://www.usenix.org/conference/fast15/technical-sessions/presentation/jannen}.

\bibitem{KnuthVol3}
Donald~E. Knuth.
\newblock {\em The Art of Computer Programming, Volume {III:} Sorting and
  Searching}.
\newblock Addison-Wesley, 1973.

\bibitem{Kuspert83}
Klaus Küspert.
\newblock {Storage Utilization in B*-Trees with a Generalized Overflow
  Technique}.
\newblock {\em Acta Informatica}, 19:35–55, 1983.
\newblock \href {https://doi.org/10.1007/BF00263927}
  {\path{doi:10.1007/BF00263927}}.

\bibitem{mulmuley}
Ketan Mulmuley.
\newblock {\em Computational Geometry: An Introduction Through Randomized
  Algorithms}.
\newblock Prentice Hall, 1994.

\bibitem{NaorT01}
Moni Naor and Vanessa Teague.
\newblock {Anti-presistence: history independent data structures}.
\newblock In {\em Proc. of 33rd Symposium on Theory of Computing (STOC'01)},
  page 492–501, 2001.
\newblock \href {https://doi.org/10.1145/380752.380844}
  {\path{doi:10.1145/380752.380844}}.

\bibitem{SeidelA96}
Raimund Seidel and Cecilia~R. Aragon.
\newblock {Randomized Search Trees}.
\newblock {\em Algorithmica}, 16(4/5):464–497, 1996.
\newblock \href {https://doi.org/10.1007/BF01940876}
  {\path{doi:10.1007/BF01940876}}.

\bibitem{Snyder77}
Lawrence Snyder.
\newblock {On uniquely represented data strauctures}.
\newblock In {\em {18th Annual Symposium on Foundations of Computer Science
  (SFCS'77)}}, page 142–146, 1977.
\newblock \href {https://doi.org/10.1109/SFCS.1977.22}
  {\path{doi:10.1109/SFCS.1977.22}}.

\bibitem{SundarT94}
Rajamani Sundar and Robert~Endre Tarjan.
\newblock {Unique Binary-Search-Tree Representations and Equality Testing of
  Sets and Sequences}.
\newblock {\em {SIAM} J. Comput.}, 23(1):24–44, 1994.
\newblock \href {https://doi.org/10.1137/S0097539790189733}
  {\path{doi:10.1137/S0097539790189733}}.

\bibitem{Vitter01}
Jeffrey~Scott Vitter.
\newblock External memory algorithms and data structures.
\newblock {\em {ACM} Comput. Surv.}, 33(2):209–271, 2001.
\newblock \href {https://doi.org/10.1145/384192.384193}
  {\path{doi:10.1145/384192.384193}}.

\bibitem{Yao78}
Andrew~Chi{-}Chih Yao.
\newblock {On Random 2-3 Trees}.
\newblock {\em Acta Informatica}, 9:159–170, 1978.
\newblock \href {https://doi.org/10.1007/BF00289075}
  {\path{doi:10.1007/BF00289075}}.

\end{thebibliography}

\appendix

\section{Induction Base Case: Expected Size of Buffering Trees}
\label{sec:size-buffer}

\begin{lemma}
\label{lem:lower-upper-on-accumulative-prob}
Let $X_1,\dots,X_m$ be some non-negative integral random variables  where $X_1+X_2+\dots+X_m=n$. For integers $n\geq 1$ and $t\leq n$, we  have the tail-bounds
\begin{align*}
    \left(1-\frac{t}{n+1}\right)^{m-1}~~\leq~~ \Pr[X_i \geq t] ~~\leq~~ \left(1-\frac{t}{n+m-1}\right)^{m-1}
\end{align*} 
\end{lemma}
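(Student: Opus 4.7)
My plan is to reduce the probability to a ratio of binomial coefficients via the stars-and-bars characterization, expand the ratio as a product of $m-1$ linear factors, and then bound each factor by its extremes over the range of denominators that appear.

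First I would invoke the reasoning of Observation~\ref{obs:Xi-characterization}: solutions to $X_1+\dots+X_m=n$ in non-negative integers are uniformly distributed and counted by $\binom{n+m-1}{m-1}$. The number of solutions with $X_i\geq t$ is obtained by the substitution $Y_i=X_i-t$, which gives $\binom{n-t+m-1}{m-1}$. Hence
\begin{align*}
\Pr[X_i\geq t] \;=\; \frac{\binom{n-t+m-1}{m-1}}{\binom{n+m-1}{m-1}} \;=\; \prod_{k=0}^{m-2}\frac{n-t+m-1-k}{n+m-1-k} \;=\; \prod_{k=0}^{m-2}\left(1-\frac{t}{n+m-1-k}\right).
\end{align*}

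Next I would observe that as $k$ ranges over $\{0,1,\dots,m-2\}$, the denominator $n+m-1-k$ ranges over the integers $\{n+1,\dots,n+m-1\}$. Since $t\geq 0$, the function $d \mapsto 1-t/d$ is increasing in $d$, so every factor lies in the closed interval $[\,1-t/(n+1),\,1-t/(n+m-1)\,]$. Replacing each of the $m-1$ factors by the smaller endpoint yields the lower bound $(1-t/(n+1))^{m-1}$, and replacing each by the larger endpoint yields the upper bound $(1-t/(n+m-1))^{m-1}$, establishing the lemma.

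There is essentially no obstacle here: the only thing to be careful about is the sign of $1-t/d$, which is automatic because $t\leq n<n+1$ ensures every factor is non-negative (so the monotonicity argument applied termwise does not flip inequalities). The lemma itself will then serve as the workhorse tail estimate in the appendix bounds on the expected number of non-full buffer blocks, which feed the induction base case of Theorem~\ref{lem:size}.
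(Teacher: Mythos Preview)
Your proposal is correct and essentially identical to the paper's own proof: both compute $\Pr[X_i\ge t]=\binom{n-t+m-1}{m-1}/\binom{n+m-1}{m-1}$ via stars-and-bars, expand the ratio as the product $\prod_{k=0}^{m-2}\bigl(1-\frac{t}{n+m-1-k}\bigr)$, and bound each factor by its extreme values over the denominators $n+1,\dots,n+m-1$. Your explicit remark that $t\le n$ keeps all factors non-negative (so the termwise bounds multiply correctly) is a minor clarification the paper leaves implicit.
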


\begin{proof}
\begin{align*}
    \Pr[X_i\geq t] &= 
    \frac{\binom{n-t+m-1}{m-1}}{\binom{n+m-1}{m-1}} =\frac{(n-t+m-1)!~(n-t)!~(m-1)!}{(n+m-1)!~n!~(m-1)!}\\
    &= \frac{n-t+m-1}{n+m-1} ~\frac{n-t+m-2}{n+m-2} ~\dots~ \frac{n-t+m-(m-1)}{n+m-(m-1)}\\
    &=\left(1-\frac{t}{n+m-1}\right) \left(1-\frac{t}{n+m-2}\right) \dots \left(1-\frac{t}{n+m-(m-1)}\right)
\end{align*}
Since for $1\leq i\leq m-1$, $\frac{t}{n+m-1}\leq \frac{t}{n+m-i}\leq \frac{t}{n+m-(m-1)}=\frac{t}{n+1}$, the lemma holds.
\end{proof}

\begin{lemma}
    \label{lem:subtracting-same-amount-from-num-and-denom}
    For $0< c\leq a\leq b$, we have  $\frac{a-c}{b-c}\leq \frac{a}{b}$.
\end{lemma}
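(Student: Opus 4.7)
My plan is to reduce the displayed inequality to the hypothesis $a\leq b$ by a single cross-multiplication. The only preliminary point is a sign check: the statement only makes sense when $b>c$, since otherwise the left-hand denominator is zero; the hypotheses $0<c\leq a\leq b$ immediately give $b\geq c>0$, and I will work in the non-degenerate regime $b>c$ so that $b-c>0$. In particular, the product $b(b-c)$ is strictly positive, and multiplying the claim through by it preserves the direction of the inequality.

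Carrying this out, $\frac{a-c}{b-c}\leq \frac{a}{b}$ is equivalent to $b(a-c)\leq a(b-c)$. Expanding both sides yields $ab-bc\leq ab-ac$, i.e.\ $ac\leq bc$, which after dividing by $c>0$ becomes $a\leq b$ — precisely the hypothesis. Since every step is an equivalence, reversing the chain establishes the lemma. There is no real obstacle; the statement is a one-line algebraic fact, and the only thing worth flagging is the sign analysis of the two denominators so that the cross-multiplication is justified without inadvertently flipping the inequality.
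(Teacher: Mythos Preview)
Your proof is correct and follows essentially the same route as the paper's: both cross-multiply to reduce the inequality to $ab - bc \leq ab - ac$, equivalently $a \leq b$. Your extra care about the positivity of $b-c$ (and the degenerate case $b=c$) is a welcome addition that the paper's one-line proof glosses over.
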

\begin{proof}
    $
       \frac{a-c}{b-c}\leq \frac{a}{b} \Longleftrightarrow ab-bc\leq ab-ac \Longleftrightarrow a\leq b
    $.
\end{proof}

For a subtree of size $n\leq \beta+\alpha=:(\alpha+1)\rho+\alpha$, fan-out parameter $\delta_n=\min\{\alpha+1,\max\{1, \lceil\frac{n-\alpha}{\rho}\rceil\}\}$ is equal to $\max\{1, \lceil \frac{n-\alpha}{\rho}\rceil\}$.
Therefore, for $n$ with $2\leq\delta_n\leq\alpha+1$, we have $ (\delta_n-1)\rho+\alpha <n\leq \delta_n\rho+\alpha$.
Moreover, for $k\in\{2,3,\dots,\delta_n\}$, $n>(k-1)\rho+\alpha$ if and only if $\delta_n\geq k$.

\begin{lemma}
\label{lem:upper-bound-on-accumulative-prob}
For a subtree of size $n\leq\beta+\alpha$, where $\delta_n\leq \alpha+1$, and all $k\in\{2,3,\dots,\delta_n\}$, we have
\begin{align*}
    \Pr[\delta_{X_i}\geq k] = \Pr[X_i>(k-1)\rho+\alpha] \leq \left(1-\frac{k-1}{\delta_n}\right)^{\delta_n-1}~,
\end{align*}
where $X_i$ is the number of keys in the i-th section beneath the root.
\end{lemma}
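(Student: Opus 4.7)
The plan is to apply Lemma~\ref{lem:lower-upper-on-accumulative-prob} to the partition beneath the root. By Observation~\ref{obs:Xi-characterization}, $X_1,\dots,X_{\alpha+1}$ are non-negative integers summing to $n-\alpha$ with all solutions equally likely, which matches the hypothesis of Lemma~\ref{lem:lower-upper-on-accumulative-prob} with $m=\alpha+1$ and the role of its ``$n$'' played by $n-\alpha$. Substituting, its upper tail bound reads $\Pr[X_i\geq t]\leq (1-t/n)^{\alpha}$ for every integer $t$ in the admissible range.

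First I would rewrite the event. Since $k\in\{2,\dots,\delta_n\}\subseteq\{2,\dots,\alpha+1\}$, the outer $\min$ in $\delta_{X_i}=\min\{\alpha+1,\max\{1,\lceil (X_i-\alpha)/\rho\rceil\}\}$ is inactive, and $\lceil(X_i-\alpha)/\rho\rceil\geq k$ is equivalent to $X_i-\alpha>(k-1)\rho$. Hence $\Pr[\delta_{X_i}\geq k]=\Pr[X_i\geq (k-1)\rho+\alpha+1]$, and the tail bound yields
\[
\Pr[\delta_{X_i}\geq k]\;\leq\;\Big(1-\tfrac{(k-1)\rho+\alpha+1}{n}\Big)^{\alpha}.
\]

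The main step is to upgrade this to the stated $\bigl(1-\tfrac{k-1}{\delta_n}\bigr)^{\delta_n-1}$. I would first prove the inner comparison $\tfrac{(k-1)\rho+\alpha+1}{n}\geq \tfrac{k-1}{\delta_n}$, which after cross-multiplying reduces to $n(k-1)\leq \delta_n\bigl((k-1)\rho+\alpha+1\bigr)$. Plugging the regime bound $n\leq \delta_n\rho+\alpha$ into the left-hand side gives $n(k-1)\leq (k-1)\delta_n\rho+(k-1)\alpha$, so it suffices to verify $(k-1)\alpha\leq \delta_n(\alpha+1)$; this is immediate from $k-1\leq \delta_n-1<\delta_n$ together with $\alpha<\alpha+1$. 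Both $1-\tfrac{(k-1)\rho+\alpha+1}{n}$ and $1-\tfrac{k-1}{\delta_n}$ lie in $[0,1]$, the former being non-negative since $n>(\delta_n-1)\rho+\alpha\geq (k-1)\rho+\alpha$.

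Finally, because $\delta_n-1\leq \alpha$ and the base $a:=1-\tfrac{(k-1)\rho+\alpha+1}{n}\in[0,1]$, monotonicity of exponentiation yields $a^{\alpha}\leq a^{\delta_n-1}\leq \bigl(1-\tfrac{k-1}{\delta_n}\bigr)^{\delta_n-1}$, which is the asserted bound. The only mildly technical step is the algebraic comparison of the two fractions; everything else is routine bookkeeping about $>$ versus $\geq$ and about which clause of the $\min$/$\max$ in the definition of $\delta$ is active.
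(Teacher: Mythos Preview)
There is a genuine gap at the very first step. You invoke Observation~\ref{obs:Xi-characterization} to claim that ``$X_1,\dots,X_{\alpha+1}$ are non-negative integers summing to $n-\alpha$'' and therefore take $m=\alpha+1$ in Lemma~\ref{lem:lower-upper-on-accumulative-prob}. But the lemma under consideration concerns a \emph{buffering} subtree of weight $n\leq\beta+\alpha$, whose root has fan-out $\delta_n$ rather than $\alpha+1$: only the $\delta_n-1$ keys of smallest priority are \emph{active} separators, and the remaining $n-\alpha$ keys are partitioned into $\delta_n$ (not $\alpha+1$) sections. The correct model is $X_1+\cdots+X_{\delta_n}=n-\alpha$ with all compositions equally likely, so Lemma~\ref{lem:lower-upper-on-accumulative-prob} must be applied with $m=\delta_n$, which yields
\[
\Pr[X_i\geq t]\;\leq\;\Bigl(1-\tfrac{t}{\,n-\alpha+\delta_n-1\,}\Bigr)^{\delta_n-1},
\]
exactly as the paper does. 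Your claimed bound $\Pr[X_i\geq t]\leq(1-t/n)^{\alpha}$ is simply false when $\delta_n<\alpha+1$: for instance with $\delta_n=2$, $k=2$, and $n$ near the top of its range $2\rho+\alpha$, one has $t=\rho+\alpha+1$ and the true probability $\Pr[X_i\geq t]=(n-\alpha-t+1)/(n-\alpha+1)\approx 1/2$, whereas your ``upper bound'' is $\approx(1/2)^{\alpha}$, which can be arbitrarily smaller.

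Once the correct starting inequality (with $m=\delta_n$) is in place, the remaining manipulation is the one in the paper: use $n\leq\delta_n\rho+\alpha$ to replace the denominator by $\delta_n\rho+\delta_n-1$, then apply Lemma~\ref{lem:subtracting-same-amount-from-num-and-denom} to subtract $\delta_n-1$ from numerator and denominator, and finally drop the non-negative term $\alpha+1-(\delta_n-1)$ in the numerator to arrive at $(1-(k-1)/\delta_n)^{\delta_n-1}$. Your exponent-reduction trick $a^{\alpha}\leq a^{\delta_n-1}$ is no longer available (the exponent is already $\delta_n-1$), and the fraction comparison you prove, $\tfrac{(k-1)\rho+\alpha+1}{n}\geq\tfrac{k-1}{\delta_n}$, is not the one needed; the relevant denominator is $n-\alpha+\delta_n-1$, not $n$.
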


\begin{proof}
    \begin{align*}
        \Pr[X_i>(k-1)\rho+\alpha] &= \Pr[X_i\geq(k-1)\rho+\alpha+1]
    \end{align*}
    Set variables $m$, $t$, and $n$ of Lemma~\ref{lem:lower-upper-on-accumulative-prob} to $\delta_n$, $(k-1)\rho+\alpha+1$, and $n-\alpha$ respectively.
    We have
    \begin{align*}
        \Pr[X_i\geq(k-1)\rho+\alpha+1] 
        \leq 
        \left(1-\frac{(k-1)\rho+\alpha+1}{n-\alpha+\delta_n-1}\right)^{\delta_n-1}~.
    \end{align*}

    Since $n\leq \delta_n \rho+\alpha$, we have
    \begin{align*}
        \Pr[X_i > (k-1)\rho+\alpha]
        \leq 
        \left(1-\frac{(k-1)\rho+\alpha+1}{\delta_n \rho+\alpha-\alpha+\delta_n-1} \right)^{\delta_n-1}
    \end{align*}
    Using Lemma~\ref{lem:subtracting-same-amount-from-num-and-denom}, we subtract $\delta_n-1$ from the numerator and denominator of $\frac{ (k-1)\rho+\alpha+1 }{ \delta_n \rho+\alpha-\alpha+\delta_n-1}$.
Thus
    \begin{align*}
        \Pr[X_i>(k-1)\rho+\alpha]
        \leq 
        \left(1- \frac{(k-1)\rho+\alpha+1-(\delta_n-1)}{\delta_n \rho} \right)^{\delta_n-1}~.
    \end{align*}
    Since $\delta_n$ is at most $\alpha+1$, we have $\alpha+1-(\delta_n-1)$ is positive and conclude
    \begin{align*}
        \Pr[X_i>(k-1)\rho+\alpha] \leq 
        \left(1-\frac{(k-1)\rho}{\delta_n \rho} \right)^{\delta_n-1} 
        = 
        \left(1-\frac{k-1}{\delta_n} \right)^{\delta_n-1}~,
    \end{align*}
    as stated.
 \end{proof}

We will combine the results of Lemmas \ref{lem:upper-bound-on-accumulative-prob} and \ref{lem:subtracting-same-amount-from-num-and-denom} and get the following result.
\begin{corollary}
\label{simplified-exp-lem:upper-bound-on-accumulative-prob}
    For $2\leq k\leq \delta_n$, we have
    \begin{align*}
        \Pr[\delta_{X_i}\geq k] &= \Pr[X_i>(k-1)\rho+\alpha]
        \leq \left(1-\frac{k-1-1}{\delta_n-1}\right)^{\delta_n-1} \leq e^{-(k-2)} ~.
    \end{align*}
\end{corollary}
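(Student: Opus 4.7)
The statement is a direct strengthening of Lemma~\ref{lem:upper-bound-on-accumulative-prob}, which already supplies the bound $\Pr[X_i>(k-1)\rho+\alpha]\le (1-(k-1)/\delta_n)^{\delta_n-1}$. My plan is simply to chain two elementary numerical inequalities onto this existing bound, so the proof will amount to verifying the hypotheses of the helper lemmas rather than doing any new probabilistic work.

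First, I would invoke Lemma~\ref{lem:upper-bound-on-accumulative-prob} verbatim to obtain the first expression $\Pr[\delta_{X_i}\ge k]=\Pr[X_i>(k-1)\rho+\alpha]\le (1-(k-1)/\delta_n)^{\delta_n-1}$. Next, to pass from $(1-(k-1)/\delta_n)^{\delta_n-1}$ to $(1-(k-2)/(\delta_n-1))^{\delta_n-1}$, it suffices to show the base satisfies $1-(k-1)/\delta_n\le 1-(k-2)/(\delta_n-1)$, equivalently $(k-2)/(\delta_n-1)\le (k-1)/\delta_n$. This is a direct application of Lemma~\ref{lem:subtracting-same-amount-from-num-and-denom} with $a=k-1$, $b=\delta_n$, $c=1$; the hypothesis $0<c\le a\le b$ holds because $k\ge 2$ gives $c=1\le k-1=a$, and $k\le \delta_n$ gives $a=k-1\le \delta_n=b$. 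Since both bases lie in $[0,1]$, raising to the $(\delta_n-1)$-th power preserves the inequality, yielding the middle expression of the corollary.

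Finally, for the last inequality $(1-(k-2)/(\delta_n-1))^{\delta_n-1}\le e^{-(k-2)}$, I would apply the standard estimate $1+x\le e^{x}$ with $x=-(k-2)/(\delta_n-1)$, so that
\[
\Bigl(1-\tfrac{k-2}{\delta_n-1}\Bigr)^{\delta_n-1}\le \Bigl(e^{-(k-2)/(\delta_n-1)}\Bigr)^{\delta_n-1}=e^{-(k-2)}.
\]
Concatenating the three bounds gives exactly the corollary's chain.

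The main (mild) obstacle is purely bookkeeping: one must verify both that the hypotheses of Lemma~\ref{lem:subtracting-same-amount-from-num-and-denom} are met for every $k\in\{2,\dots,\delta_n\}$ (in particular the degenerate edge case $k=2$, where the middle expression becomes $1^{\delta_n-1}=1$ and the bound $e^{0}=1$ is trivial but still correct), and that the base $1-(k-2)/(\delta_n-1)$ is nonnegative, which is guaranteed by $k-2\le \delta_n-2\le \delta_n-1$. No further combinatorial argument is needed.
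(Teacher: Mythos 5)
Your proof is correct and follows exactly the route the paper intends: the paper merely remarks that the corollary is obtained by combining Lemma~\ref{lem:upper-bound-on-accumulative-prob} with Lemma~\ref{lem:subtracting-same-amount-from-num-and-denom}, and you have supplied precisely that chaining (with $a=k-1$, $b=\delta_n$, $c=1$) plus the standard $1+x\le e^x$ step for the final bound. The bookkeeping checks (hypotheses of the helper lemma, non-negativity of the base, the trivial $k=2$ case) are accurate and complete.
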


\begin{observation}
\label{obs:buffer-delta-one-nonfull-blocks}
For a subtree with $\delta_{n}=1$, the data structure is a simple list and has at most one non-full block.
\end{observation}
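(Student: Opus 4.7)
The plan is to appeal directly to the buffer layout for $\delta=1$ described in Section~\ref{sec:RBSTs}. Recall that when $\delta_n=1$ the buffer root has at most one child, so the secondary structure is a chain of blocks (i.e.\ a list), and by construction the $i$-th block in the chain is filled with the $\alpha$ keys of smallest priority among those not yet assigned to the preceding $i-1$ blocks.

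First I would dispense with the degenerate case $n\le\alpha$, where the $\max\{1,\cdot\}$ clamp in the definition of $\delta_n$ is what forces the value $1$. Here the chain collapses to a single block holding all $n$ keys, and that single block is trivially the only non-full block.

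For $\alpha<n\le\alpha+\rho$ I would write $n=q\alpha+r$ with $0\le r<\alpha$. Greedy $\alpha$-at-a-time filling produces exactly $q$ blocks that each receive a full complement of $\alpha$ keys; if $r>0$ a single trailing block then holds the remaining $r<\alpha$ keys and is the unique non-full block, while if $r=0$ every block of the list is full. Either way the chain contains at most one non-full block, which is the claim.

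This is a purely structural observation once the $\delta=1$ buffer layout is unpacked, so there is no genuine obstacle; the entire argument reduces to the remark that greedy partitioning of $n$ keys into blocks of capacity $\alpha$ leaves at most one incomplete block at the end, independently of the permutation $\pi$.
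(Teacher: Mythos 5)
Your argument is correct and is essentially the only natural one: the paper states this as an observation without proof precisely because it follows immediately from the $\delta=1$ buffer layout (a chain of blocks, filled greedily $\alpha$ keys at a time by priority, leaving at most one trailing non-full block). Your explicit handling of the $n\le\alpha$ clamp case and the division $n=q\alpha+r$ is a clean spelling-out of what the paper leaves implicit.
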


\begin{lemma}
\label{lem:buffer-delta-two-nonfull-blocks}
For a subtree with $\delta_n =2$,
 the expected number of non-full blocks $\E[E_n]\leq 5$.
\end{lemma}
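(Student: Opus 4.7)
The plan is to prove the stronger bound $\E[E_x] \leq 2 + 2(x - \rho - \alpha)/(\rho + 2\alpha)$ for all $x$ with $\delta_x = 2$, by strong induction on $x$. On the relevant range $x \in (\rho + \alpha, 2\rho + \alpha]$ this ansatz is at most $2 + 2\rho/(\rho + 2\alpha) < 4 \leq 5$, which will give the lemma.

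Since $\delta_n = 2$, the root is full (it stores $\alpha$ keys), so $E_n = Y_1 + Y_2$ where $Y_i$ counts the non-full blocks in the $i$-th child subtree, of size $X_i$. By Observation~\ref{obs:Xi-characterization}, $X_1$ is uniform on $\{0,\ldots,n-\alpha\}$, and Lemma~\ref{lem:index-exchange} gives $\E[Y_1] = \E[Y_2]$. A case split on $X_i$ controls each child: the subtree is empty for $X_i = 0$; it is a list with at most one non-full block by Observation~\ref{obs:buffer-delta-one-nonfull-blocks} for $0 < X_i \leq \rho + \alpha$ (since $\delta_{X_i} \leq 1$); and it falls under the inductive hypothesis for $X_i > \rho + \alpha$.

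Writing $m := n - \rho - \alpha \in [1, \rho]$: for $m \leq \alpha$ no child can hit the recursive branch (since $X_1 \leq \rho + m \leq \rho + \alpha$), and a direct computation yields $\E[E_n] \leq 2(\rho + m)/(\rho + m + 1) < 2$, which is within the ansatz. For $m > \alpha$, integrating against the uniform distribution of $X_1$ and applying the inductive hypothesis term-by-term (using the elementary sums $\sum_{j=1}^{m-\alpha} 1 = m - \alpha$ and $\sum_{j=1}^{m-\alpha} j = (m-\alpha)(m-\alpha+1)/2$) gives
\[ \E[E_n] \leq \frac{2(\rho + \alpha) + 4(m - \alpha) + 2(m - \alpha)(m - \alpha + 1)/(\rho + 2\alpha)}{\rho + m + 1}. \]
Verifying that this bound is at most $2 + 2m/(\rho + 2\alpha)$ reduces, after clearing denominators and collecting by powers of $m$, to the inequality $\rho(A - 2) + m(B(\rho + 2\alpha) - A) + [A(1 + 2\alpha) - 2\alpha - B\alpha(\alpha-1)] \geq 0$. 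The choice $A = 2$ and $B = 2/(\rho + 2\alpha)$ makes both the $\rho$-coefficient and the $m$-coefficient vanish simultaneously, leaving the positive constant $2 + 2\alpha - 2\alpha(\alpha-1)/(\rho + 2\alpha)$.

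The main obstacle is identifying the right ansatz. A uniform bound $\E[E_x] \leq M$ does not close the recursion: it would force $M \geq 2 + 2M \Pr[X_1 > \rho + \alpha]$, but $2\Pr[X_1 > \rho + \alpha]$ approaches $1$ as $\rho/\alpha \to \infty$, leaving no slack. The linear correction $B(x - \rho - \alpha)$ with $B = \Theta(1/\rho)$ is the minimal refinement that closes the induction: $B$ must be large enough that $B(\rho + 2\alpha)$ cancels the leading $m$-coefficient on the recursion's right-hand side, yet small enough that $B\rho$ stays $O(1)$ so that the resulting bound is uniformly constant.
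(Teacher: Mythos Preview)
Your proof is correct and takes a genuinely different route from the paper's.

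The paper argues as follows. With $\delta_n=2$, at most one of the two children can again have $\delta=2$, so the process is a chain: each ``iteration'' peels off one child with $\delta\le 1$ (contributing at most one non-full block) and recurses into the other. Hence $\E[E_n]$ is at most one plus the expected number of iterations. To bound the latter, the paper \emph{relaxes} the process: instead of removing $\alpha$ keys at the root it removes only one, so after $k$ rounds the $k$ smallest-priority keys split $n$ into $k+1$ parts $Y^{(k)}_i$, and a union bound gives $\Pr[\exists i:\,Y^{(k)}_i>\rho+\alpha]\le k/2^{k-1}$. Summing yields $\E[\text{iterations}]\le \sum_{k\ge 1} k/2^{k-1}=4$, hence $\E[E_n]\le 5$.

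Your argument is a direct strong induction with the linear ansatz $\E[E_x]\le 2+2(x-\rho-\alpha)/(\rho+2\alpha)$, exploiting that $X_1$ is uniform on $\{0,\dots,n-\alpha\}$ when $\delta_n=2$. The algebra closes exactly because $A=2$ and $B=2/(\rho+2\alpha)$ kill the $\rho$- and $m$-coefficients, leaving the positive constant $2+2\alpha-2\alpha(\alpha-1)/(\rho+2\alpha)$; I checked your computation and it is correct. This yields the sharper bound $\E[E_n]<4$ (indeed $<2+2\rho/(\rho+2\alpha)$), at the cost of having to discover the ansatz---which you motivate well in the final paragraph. The paper's approach trades tightness for a cleaner probabilistic picture (a geometric-type tail after relaxation) and avoids the search for a strengthened induction hypothesis; your approach is more elementary and quantitatively stronger.
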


\begin{proof}
    We will design an algorithm calculating an upper bound on the number of non-full blocks.
    The root has $\alpha$ keys, and the remaining keys are randomly split into two sections with $X_1$ and $X_2$ keys, i.e. $X_1+X_2=n-\alpha$.
    For $i\in\{1,2\}$, if $\delta_{X_i}=1$, there is at most one non-full block, and the algorithm does not need to proceed in this section anymore.
    If $\delta_{X_i}=2$, to observe non-full blocks, the $X_i-\alpha$ keys should be partitioned again.
    It is impossible for both $X_1$ and $X_2$ to have $\delta_{X_i} = 2$ since it implies that $n=\alpha+X_1+X_2\geq \alpha+2(\rho+\alpha) >2\rho+3\alpha$, which is a contradiction.
    So at each iteration, either the algorithm stops or continues with one of the sections.
    In the last step, there are two sections each with at most one non-full block.
    Thus, the expected number of iterations plus $1$ is an upper bound on $\E[E_n]$.

    Clearly, $\E[E_n]$ is indeed an increasing function of n, so losing $\alpha$ keys of the root at each iteration results in fewer non-full blocks.
    To obtain a weaker upper bound, we assume that at each step, the key with the highest priority splits the $n-1$ remaining keys, instead of $n-\alpha$ keys, into two new sections.
    These keys include the $\alpha-1$ other keys of the root as well.
    It suffices to show that the expected number of  iterations is at most $4$.

    At iteration $k$, $k$ keys with the highest priorities split $n$ keys into $k$ sections with $Y^{(k)}_1, Y^{(k)}_2,\dots, Y^{(k)}_k$ keys, where $Y^{(k)}_1+Y^{(k)}_2+\dots+Y^{(k)}_k=n-k$.
    Round $k+1$ occurs if any of $Y^{(k)}_i$s exceeds $\rho+\alpha$.
    Since $k$ top keys are chosen uniformly at random, all solutions to the equation $Y^{(k)}_1+Y^{(k)}_2+\dots+Y^{(k)}_k=n-k$ have equal probabilities.
    Next, we find an upper bound on $\Pr[Y^{(k)}_i>\rho+\alpha]$ for $i\in\{1,\dots,k\}$.
    Set parameters $m$, $t$, and $n$ of Lemma~\ref{lem:lower-upper-on-accumulative-prob} to $k$, $\rho+\alpha$, and $n-k$ respectively.
    Thus
    \begin{align*}
        \Pr[Y^{(k)}_i>\rho+\alpha] &\leq 
        \left(1-\frac{\rho+\alpha+1}{n-k+k-1} \right)^{k-1}\\
        &\leq \left(1-\frac{\rho+\alpha-1}{n-1}\right)^{k-1}~.
    \end{align*}
    Since $n$ is at least $2\rho+\alpha$, we have
    \begin{align*}
        \Pr[Y^{(k)}_i >\rho+\alpha] &\leq \left(1-\frac{\rho+\alpha-1}{2\rho+\alpha-1}\right)^{k-1}\\
        &\leq \left(1-\frac{\rho}{2\rho}\right)^{k-1} = 2^{-(k-1)}~.
    \end{align*}
    The last inequality is an application of Lemma~\ref{lem:subtracting-same-amount-from-num-and-denom}.
    Using the union bound, we have
    \begin{align*}
        \Pr[\exists Y^{(k)}_i>\rho+\alpha] &\leq k / 2^{k-1}~.
    \end{align*}
    Define $Z$ to be the number of iterations and $Z_k$ to be an indicator random variable of whether the $k$-th iteration happened.
    It follows that
    \begin{align*}
        \E[Z] &=\sum_{k=1}^{\infty} \E[Z_k] =\sum_{k=1}^{\infty} \Pr[Z_k=1]\\
        &\leq \sum_{k=1}^{\infty} k/2^{k-1} 
        = 2(\sum_{j=1}^{\infty}\sum_{k=j}^{\infty}2^{-k})
        =2(\sum_{j=1}^{\infty}\frac{2^{-j}}{1-\frac{1}{2}})
        =2\frac{\frac{2^{-1}}{1-1/2}}{1/2} = 4 ~.
    \end{align*}
\end{proof}

\begin{lemma}
\label{lem:buffer-delta-three-nonfull-blocks}
For a subtree with $\delta_n=3$,
the expected number of non-full blocks $\E[E_n] \leq 10$.
\end{lemma}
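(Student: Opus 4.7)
My plan is to mirror the proof template of Lemma~\ref{lem:buffer-delta-two-nonfull-blocks}, but to replace its iterative coupling argument with a cleaner induction on $n$ over the range $2\rho + \alpha < n \leq 3\rho + \alpha$ (where $\delta_n = 3$) that leverages the already-established bound for the $\delta = 2$ case. The root stores exactly $\alpha$ keys (hence is full and is not counted in $E_n$), and its two active separators partition the remaining $n - \alpha$ keys into three sections of sizes $X_1, X_2, X_3$. Because the three sections are identically distributed (Lemma~\ref{lem:index-exchange} for the marginals, combined with the fact that the induced priority order inside each section is uniform given its size), linearity of expectation gives $\E[E_n] = 3\,\E[E_{X_1}]$, so it suffices to bound the per-section expectation by $10/3$.

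Next I would condition on the fan-out $\delta_{X_1} \in \{1,2,3\}$: Observation~\ref{obs:buffer-delta-one-nonfull-blocks} yields $\E[E_{X_1}\mid \delta_{X_1}=1]\leq 1$, Lemma~\ref{lem:buffer-delta-two-nonfull-blocks} yields $\E[E_{X_1}\mid \delta_{X_1}=2]\leq 5$, and the induction hypothesis yields $\E[E_{X_1}\mid \delta_{X_1}=3]\leq 10$ (valid since $X_1 \leq n-\alpha < n$). The tail bounds of Lemma~\ref{lem:upper-bound-on-accumulative-prob} give $\Pr[\delta_{X_1}\geq 2]\leq (2/3)^2 = 4/9$ and $\Pr[\delta_{X_1}\geq 3]\leq (1/3)^2 = 1/9$. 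Rewriting the conditional expectation telescopically as $\E[E_{X_1}] \leq 1 + 4\,\Pr[\delta_{X_1}\geq 2] + 5\,\Pr[\delta_{X_1}\geq 3]$ and substituting gives $\E[E_{X_1}] \leq 1 + 16/9 + 5/9 = 10/3$, so $\E[E_n]\leq 10$ and the induction closes. The base case is the smallest $n$ with $\delta_n = 3$, for which every $X_i \leq n-\alpha \leq 2\rho + \alpha$ and hence $\delta_{X_i}\leq 2$ always; the $\delta = 3$ branch does not trigger and the same computation applies without invoking the hypothesis.

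The main obstacle is that the target constant $10$ is tight for this scheme: the self-referential inequality $T \leq 3 + 16/3 + (T-5)/3$ has $T = 10$ as its unique fixed point, so any slack introduced in the probability bound of Lemma~\ref{lem:upper-bound-on-accumulative-prob} or in the $\delta = 2$ constant of Lemma~\ref{lem:buffer-delta-two-nonfull-blocks} must be absorbed exactly. A secondary care point is verifying that the exchangeability argument applies to $\E[E_{X_i}]$ (not merely to the sizes $X_i$), which follows because the subtree shape of a section depends only on the section's size and the uniform priority order inside it.
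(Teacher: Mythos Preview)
Your argument is correct and in fact cleaner than the paper's own proof. Both proofs are an induction on $n$ over the range $2\rho+\alpha<n\le 3\rho+\alpha$, and both feed in Observation~\ref{obs:buffer-delta-one-nonfull-blocks}, Lemma~\ref{lem:buffer-delta-two-nonfull-blocks}, and the induction hypothesis for the three possible values of a child's fan-out. The difference is in how the contributions of the three sections are aggregated. The paper enumerates the feasible joint configurations of $(\delta_{X_1},\delta_{X_2},\delta_{X_3})$, arguing first that certain configurations are impossible, and after simplification is left needing a separate bound $\Pr[\delta_{X_1}\ge 2\wedge\delta_{X_2}\ge 2]\le 1/9$, which it establishes via an auxiliary coupling with the two-separator partition $(Y_1,Y_2,Y_3)$. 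You instead apply linearity and exchangeability to write $\E[E_n]=3\,\E[E_{X_1}]$ and then condition only on the marginal $\delta_{X_1}$, so that the sharper form of Lemma~\ref{lem:upper-bound-on-accumulative-prob} (giving $4/9$ and $1/9$ for $k=2,3$ at $\delta_n=3$) suffices by itself; no joint-probability estimate is needed. Your route is exactly the template the paper later uses in Theorem~\ref{thm:buffer-size-upper-bound} for $\delta_n\ge 4$, specialized here with the non-weakened tail bound, and as you observe it lands precisely on the constant $10$ with no room to spare. The paper's case analysis buys nothing extra here; your marginal argument is shorter and avoids the ad hoc $Y_i$ detour.
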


\begin{proof}
    $n-\alpha$ keys beneath the root are separated into three sections with $X_1, X_2,$ and $X_3$ keys.
    Same as Lemma \ref{lem:buffer-delta-two-nonfull-blocks}, we can prove that it is  impossible to have $\delta_{X_i}\geq 2$ for every $i\in\{1,2,3\}$, or $\delta_{X_i}=\delta_{X_j}=3$ for some $i\neq j\in \{1,2,3\}$.
    Due to the symmetric property of the random variables $X_i$, we can conclude
    \begin{align*}
        1 =&~~
        3\Pr[\delta_{X_1}=3\wedge\delta_{X_2}=1\wedge\delta_{X_3}=1]\\
        &+3\Pr[\delta_{X_1}=3\wedge\delta_{X_2}=2\wedge\delta_{X_3}=1]\\
        &+3\Pr[\delta_{X_1}=2\wedge\delta_{X_2}=1\wedge\delta_{X_3}=1]\\
        &+\Pr[\delta_{X_1}=1\wedge\delta_{X_2}=1\wedge\delta_{X_3}=1]~.
    \end{align*}

    Next, we would do induction on $2\rho+\alpha < n\leq 3\rho+\alpha$.
    Using Observation \ref{obs:buffer-delta-one-nonfull-blocks}, Lemma \ref{lem:buffer-delta-two-nonfull-blocks}, and induction hypothesis, we get
        \begin{align*}
        \E[E_n] \leq& ~~3\Pr[\delta_{X_1}=3\wedge\delta_{X_2}=1\wedge\delta_{X_3}=1] (10+1+1)\\
        &+3\Pr[\delta_{X_1}=2\wedge\delta_{X_2}=2\wedge\delta_{X_3}=1] (5+5+1)\\
        &+3\Pr[\delta_{X_1}=2\wedge\delta_{X_2}=1\wedge\delta_{X_3}=1] (5+1+1)\\
        &+\Pr[\delta_{X_1}=1\wedge\delta_{X_2}=1\wedge\delta_{X_3}=1] (1+1+1)
        \\
        <& ~~3\Pr[\delta_{X_1}=3\wedge\delta_{X_2}=1\wedge\delta_{X_3}=1](12-7)\\
        &+ 3\Pr[\delta_{X_1}=2\wedge\delta_{X_2}=2\wedge\delta_{X_3}=1] (11-7)\\
        &+ 7(3\Pr[\delta_{X_1}=3\wedge\delta_{X_2}=1\wedge\delta_{X_3}=1]\\
        &+ 3\Pr[\delta_{X_1}=2\wedge\delta_{X_2}=2\wedge\delta_{X_3}=1]\\
        &+3\Pr[\delta_{X_1}=2\wedge\delta_{X_2}=1\wedge\delta_{X_3}=1]\\
        &+\Pr[\delta_{X_1}=1\wedge\delta_{X_2}=1\wedge\delta_{X_3}=1])
        \\
        =&~~ 3\Pr[\delta_{X_1}=3\wedge\delta_{X_2}=1\wedge\delta_{X_3}=1] 5 + 3\Pr[\delta_{X_1}=2\wedge\delta_{X_2}=2\wedge\delta_{X_3}=1] 4 + 7\\
        &\leq 7 + 3\Pr[\delta_{X_1}\geq 3] 5 + 3\Pr[\delta_{X_1}\geq 2\wedge\delta_{X_2}\geq 2] 4~.
    \end{align*}

    Corollary \ref{simplified-exp-lem:upper-bound-on-accumulative-prob} implies that $\Pr[\delta_{X_1}\geq 3]\leq \frac{1}{9}$.
    To complete the proof we show that $\Pr[\delta_{X_1}\geq 2\wedge\delta_{X_2}\geq 2]\leq \frac{1}{9}$.
    Consequently, we get the desired inequality $\E[E_n]\leq 7+3\frac{5}{9}+3\frac{4}{9}=10$.\\

    The two top-priority keys split the keys into three parts with $Y_1$, $Y_2$, and $Y_3$ keys, where $Y_1+Y_2+Y_3=n-2$.
    The i-th part includes all $X_i$ keys beneath the root and some extra keys from the root, so for each $i\in \{1,2,3\}$, we have $X_i\leq Y_i$.
    \begin{align*}
        \Pr[\delta_{X_1}\geq 2\wedge\delta_{X_2}\geq 2] &= \Pr[X_1\geq  \rho+\alpha+1 \wedge X_2\geq \rho+\alpha+1]\\
        &\leq \Pr[Y_1\geq \rho+\alpha+1 \wedge Y_2\geq \rho+\alpha+1]\\
        &= \frac{\binom{n-2(\rho+\alpha+1)}{2}}{\binom{n}{2}}
    \end{align*}
    Same technique of Lemma \ref{lem:lower-upper-on-accumulative-prob} yields
    \begin{align*}
        \frac{\binom{n-2(\rho+\alpha+1)}{2}}{\binom{n}{2}} &\leq 
        \left(1-\frac{2(\rho+\alpha+1)}{n} \right)^2~.
    \end{align*}
    Combining the previous inequalities together with $n\leq 3\rho+\alpha$ implies
    \begin{align*}
        \Pr[\delta_{X_1}\geq 2\wedge\delta_{X_2}\geq 2] 
        &\leq 
        \left(1-\frac{2(\rho+\alpha+1)}{3\rho+\alpha} \right)^2 ~.
    \end{align*}
    Subtract $\alpha$ from the numerator and denominator.
    By Lemma \ref{lem:subtracting-same-amount-from-num-and-denom}, we have:
    \begin{align*}
        \Pr[\delta_{X_1}\geq 2\wedge\delta_{X_2}\geq 2] 
        &\leq \left(1-\frac{2\rho+\alpha+2}{3\rho}\right)^2
        \leq \left(1-\frac{2\rho}{3\rho}\right)^2
        =\frac{1}{9}~,
    \end{align*}
    as required.
\end{proof}

\begin{theorem}
\label{thm:buffer-size-upper-bound}
For a subtree of size $n\leq \beta+\alpha$, we have
that the expected number of non-full blocks $\E[E_n]\leq 27 \cdot 
\delta_n$.
\end{theorem}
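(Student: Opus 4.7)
The plan is induction on $n$ within the buffer range $1\le n\le \beta+\alpha=(\alpha+1)\rho+\alpha$, with $\delta_n$ serving as the natural induction parameter. The base cases $\delta_n\in\{1,2,3\}$ are furnished directly by Observation~\ref{obs:buffer-delta-one-nonfull-blocks} together with Lemmas~\ref{lem:buffer-delta-two-nonfull-blocks} and~\ref{lem:buffer-delta-three-nonfull-blocks}, which give the sharp bounds $\E[E_n]\le 1,\,5,\,10$, all comfortably below $27\delta_n$.

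For the inductive step with $\delta_n=k\ge 4$ the root block is full and therefore contributes nothing to $E_n$. By Observation~\ref{obs:Xi-characterization} the remaining $n-\alpha$ keys split into $\delta_n$ sections whose sizes $X_1,\dots,X_{\delta_n}$ realize a uniform composition of $n-\alpha$, and since restricting a uniform permutation to any fixed subset is again uniform, the $i$-th section is conditionally an RBST-buffer on $X_i$ keys with uniformly random priorities. Linearity and the symmetry of Lemma~\ref{lem:index-exchange} then yield
\[
\E[E_n]\;\le\;\sum_{i=1}^{\delta_n}\E[E_{X_i}]\;=\;\delta_n\cdot\E[E_{X_1}],
\]
so it suffices to prove $\E[E_{X_1}]\le 27$ uniformly in $\delta_n\ge 4$.

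I would combine the induction hypothesis with the base-case lemmas into the increasing function $f(0)=0,\,f(1)=1,\,f(2)=5,\,f(3)=10,\,f(k)=27k$ for $k\ge 4$, so that $\E[E_m]\le f(\delta_m)$ for every $m<n$. Abel summation rewrites
\[
\E[f(\delta_{X_1})]\;=\;\sum_{k'\ge 1}\bigl(f(k')-f(k'-1)\bigr)\,\Pr[\delta_{X_1}\ge k'],
\]
where the increments $1,\,4,\,5,\,98,\,27,\,27,\dots$ weight the tail probabilities. Each tail is controlled by the sharper estimate $\Pr[\delta_{X_1}\ge k']\le(1-(k'-1)/\delta_n)^{\delta_n-1}$ of Lemma~\ref{lem:upper-bound-on-accumulative-prob} for small $k'$, and by the uniform exponential $\Pr[\delta_{X_1}\ge k']\le e^{-(k'-2)}$ of Corollary~\ref{simplified-exp-lem:upper-bound-on-accumulative-prob} for the geometric remainder $27\sum_{k'\ge 5}e^{-(k'-2)}=\O(1)$.

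The main obstacle will be constant calibration. A crude use of $\Pr[\delta_{X_1}\ge 2]\le 1$, coupled with the large jump $f(4)-f(3)=98$, already pushes the estimate above $27$; the sharper Lemma~\ref{lem:upper-bound-on-accumulative-prob} bound at $k'=2$ (at most $1/2$, tending to $1/e$) is essential, and one may also need to exploit the pigeonhole remark used in Lemmas~\ref{lem:buffer-delta-two-nonfull-blocks} and~\ref{lem:buffer-delta-three-nonfull-blocks} that at most one section can saturate $\delta_{X_i}=\delta_n$. With these refinements the arithmetic closes with $\E[E_{X_1}]\le 27$ uniformly, completing the induction.
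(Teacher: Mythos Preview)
Your approach is essentially the paper's: induction on $n$, base cases $\delta_n\in\{1,2,3\}$ from Observation~\ref{obs:buffer-delta-one-nonfull-blocks} and Lemmas~\ref{lem:buffer-delta-two-nonfull-blocks}--\ref{lem:buffer-delta-three-nonfull-blocks}, then symmetry to reduce the step $\delta_n\ge 4$ to $\E[E_{X_1}]\le 27$, and finally tail bounds from Corollary~\ref{simplified-exp-lem:upper-bound-on-accumulative-prob}.

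Your worry about ``constant calibration'' is unfounded, though, and the refinements you propose (the sharper Lemma~\ref{lem:upper-bound-on-accumulative-prob} bound at $k'=2$, the pigeonhole observation) are not needed. If you actually carry out your Abel sum using only Corollary~\ref{simplified-exp-lem:upper-bound-on-accumulative-prob}, i.e.\ the trivial $\Pr[\delta_{X_1}\ge 2]\le 1$ together with $\Pr[\delta_{X_1}\ge k']\le e^{-(k'-2)}$ for $k'\ge 3$, you get
\[
1+4+\tfrac{5}{e}+\tfrac{98}{e^{2}}+27\sum_{k'\ge 5}e^{-(k'-2)}
\;=\;5+\tfrac{5}{e}+\tfrac{98}{e^{2}}+\tfrac{27}{e^{3}(1-e^{-1})}
\;\approx\;22.2\;<\;27,
\]
so the jump of $98$ is harmless once multiplied by $e^{-2}$. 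The paper avoids your explicit increments altogether by the simpler move of bounding $1,5,10$ uniformly by $10$ and writing $\sum_{k\ge 4}k\Pr[\delta_{X_i}=k]=3\Pr[\delta_{X_i}\ge 4]+\sum_{k\ge 4}\Pr[\delta_{X_i}\ge k]$, which gives $10+27\bigl(3e^{-2}+e^{-2}/(1-e^{-1})\bigr)\approx 26.7<27$; this is slightly looser numerically than your version but organizationally cleaner.
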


\begin{proof}
    We will prove the statement by induction on $n$.
    Observation \ref{obs:buffer-delta-one-nonfull-blocks}, Lemma~\ref{lem:buffer-delta-two-nonfull-blocks}, and Lemma~\ref{lem:buffer-delta-three-nonfull-blocks} provide the bases of the induction.
    The problem is divided into subproblems by using random variables $X_1,\dots,X_{\delta_n}$.
    \begin{align*}
        \E[E_n] = \E[E_{X_1}] +\E[E_{X_2}] + \dots + \E[E_{X_{\delta_n}}] = \delta_n\E[E_{X_1}]
    \end{align*}
    Note that $X_i\leq n-\alpha$, so the induction assumption is applicable to it.
    For $n\geq 4\rho+\alpha$, we will use the induction assumption and bases to obtain
    \begin{align*}
        \E[E_{X_i}] &\leq \Pr[\delta_{X_i}=1]\cdot 1 + \Pr[\delta_{X_i}=2]\cdot 5 + \Pr[\delta_{X_i}=3]\cdot 10 + \sum_{k=4}^{\delta_n} \Pr[\delta_{X_i}=k]\cdot (27\cdot k)\\
        &< \Pr[\delta_{X_i}=1]\cdot 10 + \Pr[\delta_{X_i}=2] \cdot 10 + \Pr[\delta_{X_i}=3] \cdot 10 + \sum_{k=4}^{\delta_n} \Pr[\delta_{X_i}=k] \cdot (27 \cdot k)\\
        &= 10\Pr[1\leq\delta_{X_i}\leq 3] +27\sum_{k=4}^{\delta_n}\Pr[\delta_{X_i}=k] \cdot k\\
        &\leq 10+27\sum_{k=4}^{\delta_n}\Pr[\delta_{X_i}=k] \cdot k
    \end{align*}
    We can rewrite $\sum_{k=4}^{\delta_n} \Pr[\delta_{X_i}=k] \cdot k$ to $3\Pr[\delta_{X_i}\geq 4] + \sum_{k=4}^{\delta_n} \Pr[\delta_{X_i}\geq k]$.
    Then by using Corollary~\ref{simplified-exp-lem:upper-bound-on-accumulative-prob}, we get    
    \begin{align*}
        \E[E_{X_i}] &\leq 10 + 27(3\Pr[\delta_{X_i}\geq 4] + \sum_{k=4}^{\delta_n} \Pr[\delta_{X_i}\geq k])\\
        &\leq 10+27 \left( 3e^{-2}+\sum_{k=4}^{\infty}e^{-(k-2)} \right)\\
        &\leq 10+27 \left(\frac{3}{e^2}+\sum_{k=2}^{\infty}e^{-k} \right)\\
        &= 10+27 \left(\frac{3}{e^2}+\frac{e^{-2}}{1-e^{-1}} \right)
        \leq 10+0.6202\times 27 <27~,
    \end{align*}
    as stated.
\end{proof}

\section{Additional Proofs}
\label{sec:final-claim}
\begin{lemma} \label{lem:final-claim}
Consider a subtree with $n>\gamma$ keys, where $\gamma$ is equal to $\beta+\alpha=(\alpha+1)\rho+\alpha$.
We have:
\begin{align*}
2\frac{\alpha}{\eps}\binom{n-1}{\alpha-i} - 2\frac{\alpha}{\eps}a\binom{n-1-\gamma}{\alpha-i}
&\le \binom{n}{\alpha+1-i} - \binom{n-\gamma}{\alpha+1-i} - \gamma\binom{n-1-\gamma}{\alpha-i}~.
\end{align*}
\end{lemma}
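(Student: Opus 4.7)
The plan is to collapse the inequality into a single linear relation between $A := \binom{n-1}{\alpha-i}$ and $A' := \binom{n-1-\gamma}{\alpha-i}$ by exploiting the identity $\binom{n}{\alpha+1-i} = \frac{n}{\alpha+1-i}\binom{n-1}{\alpha-i}$, and analogously $\binom{n-\gamma}{\alpha+1-i} = \frac{n-\gamma}{\alpha+1-i}\binom{n-1-\gamma}{\alpha-i}$. Substituting these in and multiplying the target inequality through by $\alpha+1-i$, it rearranges into
\begin{align*}
    \left(n - \frac{2\alpha(\alpha+1-i)}{\eps}\right) A
    ~\geq~
    \left(n + (\alpha-i)\gamma - \frac{2\alpha(\alpha+1-i)}{\eps}\right) A'.
\end{align*}
Abbreviating the bracketed factors by $K_1$ and $K_2 = K_1 + (\alpha-i)\gamma$, the claim boils down to the clean inequality $K_1 A \geq K_2 A'$.

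Next I would verify that $K_1 > 0$: since $n > \gamma \geq (\alpha+1)\rho + \alpha \geq 108\alpha(\alpha+1)/\eps$ and $108(\alpha+1)\geq 2(\alpha+1-i)$ for every $0 \leq i \leq \alpha$, the bound $n > 2\alpha(\alpha+1-i)/\eps$ is immediate. In the corner case $n \leq \gamma+\alpha-i$, the lower index of $A'$ exceeds its upper one, so $A' = 0$ and the claim reduces to $K_1 A \geq 0$, which holds trivially. In the main case $A' > 0$, it suffices to prove $A/A' \geq K_2/K_1 = 1 + (\alpha-i)\gamma/K_1$.

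For the lower bound on $A/A'$, hockey-stick gives $A - A' = \sum_{j=1}^{\gamma}\binom{n-j-1}{\alpha-i-1}$, and monotonicity of the summand in $j$ yields $A - A' \geq \gamma\binom{n-\gamma-1}{\alpha-i-1}$. The elementary identity $\binom{n-\gamma-1}{\alpha-i-1} = \frac{\alpha-i}{n-\gamma-\alpha+i}\,A'$ rewrites this as $A/A' \geq 1 + (\alpha-i)\gamma/(n-\gamma-\alpha+i)$, so it remains to verify $n-\gamma-\alpha+i \leq K_1$, i.e., $\gamma+\alpha-i \geq 2\alpha(\alpha+1-i)/\eps$. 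This is implied directly by $\gamma \geq 108\alpha(\alpha+1)/\eps$ with a factor-$54$ slack. I expect the main obstacle to be spotting the initial normalisation: once the claim is phrased as $K_1 A \geq K_2 A'$, everything else is bookkeeping, and the generous calibration $\rho = 108\alpha/\eps$ is precisely what makes the two terminal estimates go through uniformly over all admissible $i$.
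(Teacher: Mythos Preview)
Your proof is correct and takes a genuinely different route from the paper. The paper proceeds by induction on $n$: it verifies the base case $n=\gamma+1$ by direct computation (reducing to $2\frac{\alpha}{\eps}(\alpha-i+1)\le\gamma+1$), and for the step from $n$ to $n+1$ it splits every binomial via Pascal's rule $\binom{m}{k}=\binom{m-1}{k}+\binom{m-1}{k-1}$ and applies the induction hypothesis twice, once at index $i$ and once at $i+1$ (with a separate direct check at $i=\alpha-1$). You instead give a direct, induction-free argument: the identity $\binom{n}{\alpha+1-i}=\frac{n}{\alpha+1-i}\binom{n-1}{\alpha-i}$ collapses the whole inequality to $K_1A\ge K_2A'$, and the telescoping bound $A-A'\ge\gamma\binom{n-\gamma-1}{\alpha-i-1}=\frac{(\alpha-i)\gamma}{n-\gamma-\alpha+i}A'$ finishes it. Both arguments ultimately rest on the same slack $\gamma\ge 108\alpha(\alpha+1)/\eps\gg 2\alpha(\alpha+1-i)/\eps$, but your approach makes the role of the calibration constant more transparent and avoids the bookkeeping of the double (on $n$ and implicitly on $i$) induction.
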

\begin{proof}
    The proof is by induction on $n$. 
    For $n=\gamma+1$ the statement is
    \begin{align}
    2\frac{\alpha}{\eps}\binom{\gamma}{\alpha-i}
    &\le \binom{\gamma+1}{\alpha+1-i}\\
    \Longleftrightarrow\quad
    2\frac{\alpha}{\eps} \frac{\gamma(\gamma-1)\ldots(\gamma-\alpha+i+1)}{(\alpha-i)!}
    &\le\frac{(\gamma+1)\gamma\ldots(\gamma-\alpha+i+1)}{(\alpha-i+1)(\alpha-i)!}\\
    \Longleftrightarrow \quad
    2\frac{\alpha}{\eps}(\alpha-i+1) &\le \gamma + 1    \quad,
    \end{align}
    This is true because $\gamma=(\alpha+1)\rho+\alpha =108(\alpha+1)\alpha/\eps+\alpha>2(\alpha+1)\alpha/\eps$.

    Assume the lemma holds for each $\gamma<m \leq n$.
    We will prove the lemma for $n+1$, e.g. we will show: 
    \begin{align*}
    2\frac{\alpha}{\eps}\binom{n}{\alpha-i} - 2\frac{\alpha}{\eps}\binom{n+1-1-\gamma}{\alpha-i}
    &\le \binom{n+1}{\alpha+1-i} - \binom{n+1-\gamma}{\alpha+1-i} - \gamma\binom{n+1-1-\gamma}{\alpha-i}
    \end{align*}
    holds for all $i$.
    The inequality is true for $i=\alpha-1$, since
    \begin{align*}
    2\frac{\alpha}{\eps} n - 2\frac{\alpha}{\eps}(n-\gamma)
    &\le \frac{n(n+1)}{2} - \frac{(n-\gamma)(n-\gamma+1)}{2} - \gamma(n-\gamma)\\
    \Longleftrightarrow\quad
    2\frac{\alpha}{\eps} \gamma &\le \frac{2\gamma n - \gamma^2 +\gamma}{2} - \frac{2\gamma n -2\gamma^2}{2}\\
    \Longleftrightarrow\quad
    2\frac{\alpha}{\eps} \gamma &\le \frac{2n +\gamma^2 + \gamma}{2} \quad.
    \end{align*}
    The right-hand side is greater than $\frac{\gamma^2 + \gamma}{2}$.
    Hence, it suffices to observe that
    \begin{align*}
    2\frac{\alpha}{\eps} \gamma \le \frac{\gamma^2 + \gamma}{2}
    \quad
    \Longleftrightarrow\quad
    4\frac{\alpha}{\eps} \le \gamma+1 =(\alpha+1)\rho+\alpha+1 = (\alpha+1)108\alpha/\eps+\alpha+1
    \end{align*}
    which is true.

    For $i < \alpha-1$, we will use the identity equation $\binom{m}{k}=\binom{m-1}{k}+\binom{m-1}{k-1}$ together with the induction assumption.
    We have
    \begin{align*}
    &2\frac{\alpha}{\eps}\binom{n}{\alpha-i} - 2\frac{\alpha}{\eps}\binom{n-\gamma}{\alpha-i}\\
    &=2\frac{\alpha}{\eps} \binom{n-1}{\alpha-(i+1)} +2\frac{\alpha}{\eps} \binom{n-1}{\alpha-1} -2\frac{\alpha}{\eps}\binom{n-1-\gamma}{\alpha-(i+1)} -2\frac{\alpha}{\eps}\binom{n-1-\gamma}{\alpha-i}\\
    &= \left[2\frac{\alpha}{\eps}\binom{n-1}{\alpha-(i+1)} -2\frac{\alpha}{\eps}\binom{n-1-\gamma}{\alpha-(i+1)}\right] 
    +\left[2\frac{\alpha}{\eps}\binom{n-1}{\alpha-i} -2\frac{\alpha}{\eps}\binom{n-1-\gamma}{\alpha-i}\right]\\
    &\le
    \binom{n}{\alpha+1-(i+1)} - \binom{n-\gamma}{\alpha+1-(i+1)} - \gamma\binom{n-1-\gamma}{\alpha-(i+1)} \\&+ \binom{n}{\alpha+1-i} - \binom{n-\gamma}{\alpha+1-i} - \gamma\binom{n-1-\gamma}{\alpha-i}\\
    &= \binom{n+1}{\alpha+1-i} - \binom{n+1-\gamma}{\alpha+1-i} - \gamma\binom{n-\gamma}{\alpha-i} \quad.
    \end{align*}
    \end{proof}

\end{document}